\providecommand{\U}[1]{\protect\rule{.1in}{.1in}}
\providecommand{\U}[1]{\protect\rule{.1in}{.1in}}
\newcommand{\R}{\mathbb{R}}
\newcommand{\N}{\mathbb{N}}
\newcommand{\tsup}[1]{\textsuperscript{#1}}
\newcommand{\mb}[1]{\mathbf{#1}}
\newtheorem{theorem}{Theorem}
\newtheorem{lemma}{Lemma}
\newtheorem{proposition}{Proposition}
\newtheorem{remark}{Remark}
\newtheorem{definition}{Definition}
\useunder{\uline}{\ul}{}
\newcommand{\multiline}[1]{  \begin{tabularx}{\dimexpr\linewidth-\ALG@thistlm}[t]{@{}X@{}}
#1
\end{tabularx}
}
\renewcommand{\theparagraph}{\alph{paragraph})}
\titleformat{\paragraph}[runin]
{\bfseries\itshape}{\theparagraph}{0.5em}{}[:]
\newcommand{\cmark}{\ding{51}}%
\newcommand{\xmark}{\ding{55}}%
\title{\LARGE \bf
Performance-Guaranteed Solutions for Multi-Agent Optimal Coverage Problems using Submodularity, Curvature, and Greedy Algorithms 
}
\author{Shirantha Welikala and Christos G. Cassandras 
\thanks{This work was supported in part by ... }
\thanks{Shirantha Welikala is with the Department of Electrical and Computer Engineering, Stevens Institute of Technology, Hoboken, NJ 07030, USA (Email: \texttt{{\small swelikal@stevens.edu}}). Christos G. Cassandras is with the Division of Systems Engineering and Center for Information and Systems Engineering, Boston University, Brookline, MA 02446, USA (Email: \texttt{{\small cgc@bu.edu}}).}}
\begin{document}

\maketitle

\thispagestyle{plain}
\pagestyle{plain}
\pagenumbering{arabic}

\begin{abstract}
We consider a class of multi-agent optimal coverage problems where the goal is to determine the optimal placement of a group of agents in a given mission space such that they maximize a joint ``coverage'' objective. This class of problems is extremely challenging due to the non-convex nature of the mission space and of the coverage objective. With this motivation, we propose to use a \emph{greedy algorithm} as a means of getting feasible coverage solutions efficiently. Even though such greedy solutions are suboptimal, the \emph{submodularity} (diminishing returns) property of the coverage objective can be exploited to provide performance bound guarantees - not only for the greedy solutions but also for any subsequently improved solutions. Moreover, we show that improved performance bound guarantees (beyond the standard (1-1/e) performance bound) can be established using various \emph{curvature measures} that further characterize the considered coverage problem. In particular, we provide a brief review of all existing popular curvature measures found in the submodular maximization literature, including a recent curvature measure that we proposed, and discuss in detail their applicability, practicality, and effectiveness in the context of optimal coverage problems. Moreover, we characterize the dependence of the effectiveness of different curvature measures (in providing improved performance bound guarantees) on the agent sensing capabilities. Finally, we provide several numerical results to support our findings and propose several potential future research directions.

\end{abstract}

\section{Introduction}

Our research focuses on multi-agent optimal coverage problems, which often arise in critical applications such as (but not limited to) surveillance, security, agriculture, and search and rescue \cite{Rezazadeh2019,Cohen2008}. In these problems, the overall goal is to find an effective placement (decision variable) for the agent team so that they can jointly and optimally ``cover'' (i.e., detect events of interest randomly occurring in) the mission space \cite{Zhong2011}. 

Due to their wide applicability, several variants of multi-agent optimal coverage problems have been extensively studied in the literature \cite{Zhong2011,Luo2019,Welikala2019J1,Yu2022}. Typically, these are formulated as continuous optimization problems inspired by real-world conditions (e.g., continuous mission/decision spaces). However, their corresponding solutions are computationally expensive - unless significant simplifying assumptions are made regarding the particular coverage problem setup. This is mainly due to the overall challenging nature of coverage problems resulting from the often non-linear, non-convex, and non-smooth coverage objectives and non-convex mission spaces involved.   

In this paper, we adopt an alternative approach that has been taken in the literature \cite{Sun2019,Sun2020},  
and formulate the multi-agent optimal coverage problem as a combinatorial optimization problem by discretizing the associated mission/decision space. The coverage objective function, in this setting, is proven to be a \emph{submodular} set function. In other words, the coverage objective function shows \emph{diminishing returns} when the deployed set of agents is expanded. While this combinatorial formulation simplifies the coverage problem to a certain level, the resulting problem, which now takes the form of a submodular maximization problem, is well known to be NP-hard \cite{Corneuejols1977,Nemhauser1978}.    

\emph{Greedy algorithms} are commonly used to solve submodular maximization problems due to their simplicity and computational efficiency. Most importantly, the resulting greedy solutions, even though suboptimal, entertain performance bounds that characterize their proximity to the global optimal solution. The seminal work in \cite{Nemhauser1978} has established a $1-(1-\frac{1}{N})^N$ performance bound, which becomes $(1-\frac{1}{e}) \simeq 0.6321$ as the solution set size (i.e., in the coverage problem, the number of agents) $N\rightarrow \infty$. This implies that the greedy solution is not worse than $63.21\%$ of the global optimal solution. Recent literature on submodular maximization problems has focused on developing improved performance bounds beyond this fundamental performance bound.

To this end, various \emph{curvature measures} have been proposed to further characterize any given submodular maximization problem \cite{Conforti1984,Wang2016,Liu2018,WelikalaJ02021}. These curvature measures provide corresponding performance bounds, which may or may not significantly improve upon the fundamental performance bound - depending on the nature of the considered problem/application. 
However, often these curvature measures are computationally expensive to evaluate. 
Moreover, given the variety of curvature measures available and the variations in their effectiveness with respect to problem parameters, selecting a curvature measure that is likely to provide significantly improved performance bounds for a particular application is challenging.    

Our previous work in \cite{Sun2019} considered a widely studied multi-agent optimal coverage problem (e.g., see \cite{Zhong2011,Welikala2019J1}) and showed that the \emph{total curvature} \cite{Conforti1984} and \emph{elemental curvature} \cite{Wang2016} can collectively provide improved performance bounds irrespective of the used agent sensing capabilities (weak or strong). The subsequent work in \cite{Sun2020} considered a slightly different coverage problem (i.e., the optimal agent team selection for coverage problem) and showcased the effectiveness of the \emph{greedy curvature} \cite{Conforti1984} and \emph{partial curvature} \cite{Liu2018} in providing improved performance bounds. Our most recent work in \cite{WelikalaJ02021} considered the general submodular maximization problem and proposed a new curvature measure called the \emph{extended greedy curvature}. Then, using the coverage problem, the effectiveness of this new curvature measure compared to all other curvature measures mentioned above was illustrated. 
In this paper, we consider a more general and comprehensive coverage problem and investigate the effectiveness of all these curvature measures through theoretical analysis and numerical experiments.

Our contributions are as follows: 
(1) We consider a significantly more general coverage problem (compared to those in \cite{WelikalaJ02021,Sun2019});
(2) Submodularity and several other key properties of the considered set coverage function are established; 
(3) We review five curvature measures that are applicable to the considered coverage problem (to the best of our knowledge, this review is exhaustive);
(4) Special properties and techniques for numerical evaluation of curvature measures in the context of considered coverage problems are discussed; 
(5) We detail the effectiveness of different curvature metrics with respect to the coverage problem parameters (e.g., agent sensing capabilities);
(6) We implement the proposed coverage problem setup in a simulation environment and evaluate different curvature measures, along with their performance bounds, under a diverse set of problem conditions;

\paragraph*{Organization}\ 
The paper is organized as follows. We introduce the considered class of multi-agent optimal coverage problem in Section \ref{Sec:CoverageProblem}. Some notations, preliminary concepts, and the proposed greedy solution are reported in Section \ref{Sec:Preliminaries}. Different curvature measures found in the literature, along with discussions on their applicability, practicality, and effectiveness in the context of optimal coverage problems, are provided in Section \ref{Sec:CurvatureMeasures}. 
Interesting observations, advantages, limitations, and potential future research directions are summarized in Section \ref{Sec:Discussion}. 
Several numerical results obtained from different multi-agent optimal coverage problems \cite{Welikala2019J1} are reported in Section \ref{Sec:CaseStudies} before concluding the paper in Section \ref{Sec:Conclusion}. 

\paragraph*{Notation}\ 
The sets of real and natural numbers are denoted by $\R$ and $\N$, respectively. $\R_{\geq 0}$ represents the set of non-negative real numbers, $\R^n$ denotes the set of $n$-dimensional real (column) vectors, $\mathbb{N}_n \triangleq \{1,2,\ldots,n\}$, $\N_n^0 \triangleq \N_n \cup \{0\}$, and $[a,b]\triangleq \{x: x\in\R, a\leq x \leq b\}$. $\Vert \cdot \Vert$ represents the Euclidean norm, $\vert \cdot \vert$ denotes the scalar absolute value or set cardinality (based on the type of the argument), $\left \lfloor{\cdot}\right \rfloor$ denotes the floor operator, and $\mathbf{1}_{\{\cdot\}}$ denotes the indicator function. Given two sets $A$ and $B$, the set subtraction operator is denoted as $A-B = A \backslash B = A\cap B^c$. $2^X$ denotes the power set of a set $X$ and $\emptyset$ is the empty set.

\section{Multi-Agent Optimal Coverage Problem}
\label{Sec:CoverageProblem}
We begin by providing the details of the considered multi-agent optimal coverage problem. The goal of the considered coverage problem is to determine an optimal placement for a given team of agents (e.g., sensors, cameras, guards, etc.) in a given mission space that maximizes the probability of detecting events that occur randomly over the mission space. 

We model the \emph{mission space} $\Omega$ as a convex polytope in $\R^n$ that may also contain $h$ polytopic (and possibly non-convex) \emph{obstacles} $\{\Psi_i:\Psi_i\subset\Omega, i\in\mathbb{N}_h\}$. The characteristics of the obstacles are such that they: (1) limit the agent placement to the feasible space $\Phi \triangleq \Omega \backslash \cup_{i\in\N_h} \Psi_i$, (2) constrain the sensing capabilities of the agents via obstructing their line of sight, and (3) are in areas where no events of interest occur. 

To model the likelihood of random \emph{events} occurring over the mission space, an \emph{event density function} $R:\Omega \rightarrow \R_{\geq0}$ is used, where $R(x) = 0, \forall x \not\in \Phi$ and $\int_\Omega R(x)dx <\infty$. Note that when no prior information is available on $R(x)$, one can use $R(x)=1, \forall x\in \Phi$. 

To detect these random events, $N$ \emph{agents} are to be placed inside the feasible space $\Phi$. The placement of this team of $N$ agents (i.e., the decision variable) is denoted in a compact form by $s = [s_1, s_2, \ldots, s_N]\in\R^{m\times N}$, where each $s_i, i\in\N_N$ represents an agent placement such that $s_i\in \Phi$. 

The ability of an agent to \emph{detect} events is limited by visibility obstruction from obstacles and the agent's sensing capabilities. For an agent placed at $s_i \in \Phi$, its \emph{visibility region} is defined as 
$$
V(s_i) \triangleq \{x: (qx+(1-q)s_i)\in \Phi, \forall q \in [0,1] \}.
$$ 
In terms of agent \emph{sensing capabilities}, agents are assumed to be homogeneous, where each agent has a finite \emph{sensing radius} $\delta \in \R_{\geq 0}$ and \emph{sensing decay rate} $\lambda\in\R_{\geq 0}$. In particular, the probability of an agent placed at $s_i\in \Phi$ detecting an event occurring at $x\in \Phi$ is described by a \emph{sensing function} defined as 
\begin{equation}
\label{Eq:SensingFunction}
p(x,s_i)\triangleq e^{-\lambda \Vert x - s_i\Vert}\cdot\mathbf{1}_{\{x\in V(s_i)\}}.
\end{equation}
Figure \ref{Fig:Geometry} illustrates visibility regions and the corresponding sensing functions in a 2-D mission space with two agents and three obstacles.

\begin{figure}[!b]
    \centering
    \includegraphics[width=2.5in]{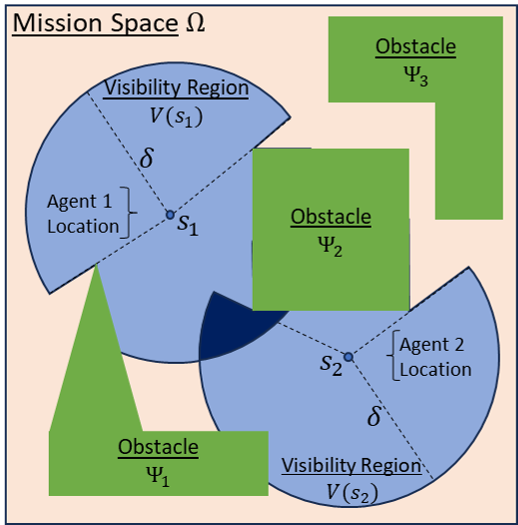}
    \caption{A mission space with two agents.}
    \label{Fig:Geometry}
\end{figure}

Given the placement $s$ of the team of agents, their combined ability to detect an event occurring at $x \in \Phi$ is characterized by a \emph{detection function} $P(x,s)$. One popular detection function is the \emph{joint detection function} given by 
\begin{equation}\label{Eq:JointDetection}
P_J(x,s) \triangleq 1-\prod_{i\in\N_N}(1-p(x,s_i)),
\end{equation}
which represents the probability of detection by at least one agent (assuming agents detect events independently from each other). Another widely used detection function is the \emph{max detection function} given by 
\begin{equation}\label{Eq:MaxDetection}
P_M(x,s) \triangleq \max_{i\in\N_N} p(x,s_i),
\end{equation}
which represents the maximum probability of detection by any agent. The following remark summarizes the pros and cons of using \eqref{Eq:JointDetection} or \eqref{Eq:MaxDetection} as the detection function $P(x,s)$.  

\begin{remark}\label{Rm:DetectionFunction}
The joint detection function \eqref{Eq:JointDetection} aggregates the contributions of all agents and, thus, offers a comprehensive view of coverage. This method is suitable for applications that benefit from the collective capabilities of the agent team. However, it can be computationally demanding to compute. 
On the other hand, the max detection function \eqref{Eq:MaxDetection} prioritizes the most effective agent at each point and, thus, offers a conservative estimate of coverage. This method is simpler and may be preferred in critical applications where ensuring the highest detection probability at every point is paramount. However, it can lead to under-utilization of the agent team as it does not fully account for the combined coverage provided by multiple sensors.
Ultimately, the choice between these detection functions should consider the coverage application's comprehensiveness and reliability requirements, the nature of the agents, and the available computational resources.
\end{remark}

Motivated by the contrasting nature of the joint and max detection functions \eqref{Eq:JointDetection}-\eqref{Eq:MaxDetection}, in this paper, we consider the detection function
\begin{equation}\label{Eq:DetectionFunction}
    P(x,s) \triangleq \theta P_J(x,s) + (1-\theta) P_M(x,s),
\end{equation}
where $\theta\in[0,1]$ is a predefined weight.

Using the notions of event density function $R(x)$ and detection function $P(x,s)$ introduced above, the considered \emph{coverage function} is defined as 
\begin{equation}\label{Eq:CoverageObjective}
H(s) \triangleq \int_\Omega R(x)P(x,s)dx.     
\end{equation}
Consequently, the considered multi-agent optimal coverage problem can be stated as
\begin{equation}\label{Eq:CoverageProblem}
 s^* = \underset{s:s_i\in \Phi, i\in \N_N}{\arg\max}\ H(s).  
\end{equation}

\paragraph*{Continuous Optimization Approach}\ 
The optimal coverage problem \eqref{Eq:CoverageProblem} involves a non-convex feasible space and a non-convex, non-linear, and non-smooth objective function. While the prior is due to the presence of obstacles in the mission space, the latter is due to the nature of the event density, sensing, joint detection, and coverage function forms. Consequently, it is extremely difficult to solve this problem without using: (1) standard global optimization solvers that are computationally expensive, (2) systematic gradient-based solvers that require extensive domain knowledge, or (3) Voronoi partition techniques that require significant limiting assumptions (e.g., convexity \cite{Yu2022} and connectivity \cite{Luo2019}).   

\paragraph*{Combinatorial Optimization Approach}\ Motivated by the aforementioned challenges associated with different continuous optimization approaches, in this paper, we take a combinatorial optimization approach to the formulated the multi-agent optimal coverage problem \eqref{Eq:CoverageProblem}. This requires reformulating \eqref{Eq:CoverageProblem} as a set function maximization problem.

First, we discretize the feasible space $\Phi$ formulating a \emph{ground set} $X = \{x_l:x_l\in \Phi, l\in\N_M\}$. For this discretization, a grid can be placed inside the mission space, and then all the grid points except for those that fall inside obstacles can be considered as the ground set\footnote{If obstacle and/or grid resolution are large (compared to the mission space dimensions), the resulting grid-based ground set needs to be further optimized to ensure uniformity over feasible space. This can be achieved by assuming all points in the obtained ground set are occupied by a set of ``virtual'' agents and then executing a simplified multi-agent optimal coverage solver. The resulting virtual agent locations can then be treated as the ground set.}.

Second, upon formulating the ground set $X$, we define a \emph{set variable} $S = \{s_i:i\in\N\}$ to represent the set of selected locations for agent placement. As we are interested in placing only $N$ agents strictly in locations selected from the ground set $X$, we need to constrain this set variable $S$ such that $S \in \mathcal{I}^N \triangleq \{Y: Y \subseteq X, \vert Y \vert \leq N\}$. It is worth noting that a set system of the form $(X,\mathcal{I}^N)$ is known as a \emph{uniform matroid} of rank $N$, and a set constraint of the form $S\in\mathcal{I}^N$ is known as a uniform matroid constraint of rank $N$. 

To represent the coverage function value of the agent placement defined by the set variable $S$, using the coverage function \eqref{Eq:CoverageObjective}, we next define a \emph{set coverage function} as: 
\begin{equation}\label{Eq:SetCoverageFunction}
    H(S) \triangleq \int_\Omega R(x)P(x,S)dx,
\end{equation}
where $P(x,S)$ represents a \emph{set detection function}. Inspired by \eqref{Eq:JointDetection}-\eqref{Eq:DetectionFunction}, this set detection function $P(x,S)$ is selected as:
\begin{equation}\label{Eq:SetDetectionFunction}
P(x,S) \triangleq \theta P_J(x,S) + (1-\theta)P_M(x,S)
\end{equation}
where 
\begin{equation}\label{Eq:SetJointDetection}
P_J(x,S) \triangleq 1-\prod_{s_i\in S}(1-p(x,s_i)),
\end{equation}
\begin{equation}\label{Eq:SetMaxDetection}
P_M(x,S) \triangleq \max_{s_i\in S} p(x,s_i).    
\end{equation}

Finally, we restate the original multi-agent optimal coverage problem \eqref{Eq:CoverageProblem} as a set function maximization problem:
\begin{equation}\label{Eq:SetCoverageProblem}
S^* = \underset{S\in\mathcal{I}^N}{\arg\max}\ H(S).
\end{equation}

Since the size of the set variable search space of the formulated set function maximization problem \eqref{Eq:SetCoverageProblem} is combinatorial (in particular $\vert \mathcal{I}^N \vert = \sum_{r=0}^N {M \choose r} =  \sum_{r=0}^M \frac{M!}{(M-r)!r!}$), obtaining an optimal solution (i.e., $S^*$) for it is impossible without significant simplifying assumptions. Therefore the overall goal is to obtain a candidate solution for \eqref{Eq:SetCoverageProblem} (say $S^G$) in an efficient manner with some guarantees on its coverage performance $H(S^G)$ with respect to optimal coverage performance $H(S^*)$.

One obvious approach to obtaining such a candidate solution efficiently is via a vanilla \emph{greedy algorithm} as given in Alg. \ref{Alg:GreedyAlgorithm}. Note that it uses the \emph{marginal coverage function} defined as
\begin{equation}\label{Eq:MarginalCoverage}
    \Delta H(y \vert S^{i-1}) \triangleq H(S^{i-1}\cup\{y\}) - H(S^{i-1})
\end{equation}
to iteratively determine the optimal agent placements until $N$ such agent placements have been chosen. Note that, in \eqref{Eq:MarginalCoverage}, we require $S^{i-1}\in\mathcal{I}^N$ and $S^{i-1}\cup\{y\} \in\mathcal{I}^N$, and clearly, $y \in X \backslash S^{i-1}$ (to ensure $\Delta H(y \vert S^{i-1})>0$). While the latter condition means that no two agents are allowed at the same place in the mission space, if necessary, by appropriately defining the ground set $X$, we can relax such placement constraints. 

Let us define the notion of \emph{marginal detection function} as
\begin{equation}\label{Eq:MarginalDetection}
    \Delta P(x, y \vert S^{i-1}) = P(x, S^{i-1}\cup\{y\}) - P(x, S^{i-1}).
\end{equation}
This definition is motivated by the linear relationship \eqref{Eq:SetCoverageFunction} between the set coverage function $H(S)$ and the set detection function $P(x,S)$ with respect to the set argument $S$, because a similar linear relationship exists between the corresponding marginal functions $\Delta H(y \vert S)$ \eqref{Eq:MarginalCoverage} and $\Delta P(x,y\vert S)$ \eqref{Eq:MarginalDetection}. In the sequel, we exploit these linear relationships to conclude certain set function properties of $H(S)$ using those of $P(x,S)$ and $\Delta P(x,y\vert S)$.

Finally, we point out that, the notation $S^i \triangleq \{s^1,s^2,\ldots,s^i\}$ (with $S^0=\emptyset$ and $S^N=S^G$) used to represent the greedy solution obtained after $i$ greedy iterations in Alg. \ref{Alg:GreedyAlgorithm} will be used more liberally for any $i\in\{0,1,2,\ldots,M\}$ in the sequel.

\begin{algorithm}[!h]
\caption{The greedy algorithm to solve \eqref{Eq:SetCoverageProblem}}\label{Alg:GreedyAlgorithm}
\begin{algorithmic}[1]
\State $i=0$; $S^i = \emptyset$;  \Comment{Greedy iteration index and solution}
\For{$i=1,2,3,\ldots,N$}
    \State $s^{i} = \arg \max_{\{y:S^{i-1} \cup \{y\} \in \mathcal{I}^N\}} \Delta H(y \vert S^{i-1})$; \Comment{New item}
    \State $S^{i} = S^{i-1} \cup \{s^{i}\}$; \Comment{Append the new item}
\EndFor
\State $S^G := S^N$;  \textbf{Return} $S^G$;
\end{algorithmic}
\end{algorithm}

\section{The Greedy Solution with Performance Bound Guarantees}
\label{Sec:Preliminaries}


In this section, we will show that the greedy solution $S^G$ obtained using Alg. \ref{Alg:GreedyAlgorithm} for the optimal coverage problem \eqref{Eq:SetCoverageProblem} is not only computationally efficient but also entertains performance guarantees with respect to the global optimal coverage performance $H(S^*)$. For this, we first need to introduce some standard set function properties.

For generality, let us consider an arbitrary set function $F:2^Y \rightarrow \R$ defined over a finite ground set $Y$. Similar to before, the corresponding marginal function is defined as 
\begin{equation}
    \Delta F(y \vert A) \triangleq F(A\cup\{y\}) - F(A)
\end{equation}
to represent the gain of set function value due to the addition of an extra element $y \in Y \backslash A$ to the set $A$. Note that we can use this marginal function notation more liberally as $\Delta F(B \vert A) \triangleq f(A\cup B) - f(A)$ for any $A, B \subseteq Y$ (even allowing $B \not\subseteq Y\backslash A$).

\begin{definition} \cite{WelikalaJ02021} \label{Def:Submodularity} The set function $F:2^Y \rightarrow \R$ is:
\begin{enumerate}
    \item \emph{normalized} if $F(\emptyset) = 0$;
    \item \emph{monotone} if $\Delta F(y \vert A)\geq 0$ for all $y,A$ where $A \subset Y$ and $y\in Y\backslash A$, or equivalently, if $F(B) \leq F(A)$ for all $B,A$ where $B \subseteq A \subseteq Y$;
    \item \emph{submodular} if $\Delta f(y\vert A) \leq \Delta f(y\vert B)$ for all $y,A,B$ where $B\subseteq A \subset Y$ and $y\in Y \backslash A$, or equivalently, \\if $F(A\cup B) + F(A\cap B) \leq F(A) + F(B)$ for all $A,B\subseteq Y$,
    \item a \emph{polymatroid} set function if it is normalized, monotone and submodular \cite{Liu2018,Boros2003}. 
\end{enumerate}
\end{definition}

It is worth noting that the first condition outlined for the submodularity property in Def. \ref{Def:Submodularity}-(3) is more commonly known as the \emph{diminishing returns} property. 

The following lemma and the theorem establish the polymatroid nature of the set coverage function $H(S)$ \eqref{Eq:SetCoverageFunction}.

\begin{lemma}\label{Lm:LinearityOfSubmodularity}
With respect to a common ground set, any positive linear combination of arbitrary polymatroid set functions is also a polymatroid set function.    
\end{lemma}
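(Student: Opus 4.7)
The plan is to verify each of the three defining properties of a polymatroid set function (normalization, monotonicity, submodularity from Definition \ref{Def:Submodularity}) in turn for a generic positive linear combination $F = \sum_{j=1}^k c_j F_j$, where each $F_j : 2^Y \to \R$ is a polymatroid set function on the common ground set $Y$ and $c_j > 0$ for every $j\in\N_k$. Since all three properties are essentially linear (or preserved under nonnegative scaling and addition), the proof reduces to bookkeeping at the level of the scalars $c_j$.

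First I would handle normalization by a direct computation: $F(\emptyset) = \sum_{j=1}^k c_j F_j(\emptyset) = 0$, using $F_j(\emptyset)=0$ for every $j$. Next, for monotonicity, I would take $A \subseteq B \subseteq Y$ and observe that the difference $F(B) - F(A) = \sum_{j=1}^k c_j\bigl(F_j(B) - F_j(A)\bigr)$ is a sum of nonnegative terms because each $F_j$ is monotone and each $c_j > 0$; this immediately gives $F(A) \le F(B)$.

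The submodularity check is the most substantive of the three, but still routine: given any $A,B \subseteq Y$, I would write
\begin{equation*}
F(A\cup B) + F(A\cap B) = \sum_{j=1}^k c_j \bigl[F_j(A\cup B) + F_j(A\cap B)\bigr],
\end{equation*}
apply the submodular inequality $F_j(A\cup B) + F_j(A\cap B) \le F_j(A) + F_j(B)$ to each term, use $c_j > 0$ to preserve the inequality under scaling, and sum over $j$ to obtain $F(A\cup B) + F(A\cap B) \le F(A) + F(B)$. Alternatively, one could argue via the diminishing-returns form, which is equally direct.

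The main obstacle, if any, is simply being careful that every $F_j$ is defined on the same ground set $Y$ (so that the operations $A\cup B$ and $A\cap B$ are meaningful uniformly for all summands) and that the coefficients are strictly positive so monotonicity and submodularity are preserved rather than reversed; neither issue is substantive given the hypotheses. Combining the three verified properties with Definition \ref{Def:Submodularity}-(4) yields that $F$ is a polymatroid set function, which is exactly the claim of Lemma \ref{Lm:LinearityOfSubmodularity}.
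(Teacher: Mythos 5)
Your proposal is correct and follows essentially the same route as the paper's proof: verify normalization, monotonicity, and submodularity term by term, using the nonnegativity of the coefficients to preserve each inequality under scaling and summation. The only difference is cosmetic---the paper dispatches the submodularity step with ``using the same arguments,'' whereas you write it out explicitly via the $F(A\cup B)+F(A\cap B)\le F(A)+F(B)$ form, which is a perfectly valid instantiation of Definition~\ref{Def:Submodularity}-(3).
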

\begin{proof}
Consider $n$ polymatroid set functions $F_1,F_2,\ldots,F_n$, where each is defined over a common ground set $Y$. Let $F\triangleq \sum_{i\in\N_n} \alpha_i F_i$, where $\alpha_i \geq 0, \forall i \in\N_n$. Clearly, $F(\emptyset)=0$ as $F_i(\emptyset)=0$ and $\alpha_i \geq 0, \forall i\in\N_n$. Therefore, $F$ is normalized. With respect to any $A,B$ such that $B \subseteq A \subseteq Y$, note that $F(B)\geq F(A)$ holds as $F_i(B) \leq F_i(A)$ and $\alpha_i \geq 0, \forall i\in\N_n$. Thus, $F$ is monotone. Using the same arguments, the submodularity and, thus, the polymatroid nature of $F$ can be established.
\end{proof}

\begin{theorem}\label{Th:SetCoverageSubmodularity}
The set coverage function $H(S)$ \eqref{Eq:SetCoverageFunction} is a polymatroid set function.
\end{theorem}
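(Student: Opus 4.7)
The plan is to establish the polymatroid property of $H(S)$ by reducing it, via Lemma \ref{Lm:LinearityOfSubmodularity} and the fact that $H$ depends linearly on $P$ through the integral \eqref{Eq:SetCoverageFunction}, to the pointwise-in-$x$ properties of the two constituent detection functions $P_J(x,\cdot)$ and $P_M(x,\cdot)$ on the ground set $X$. The overall structure mirrors the hint given just before the theorem: ``exploit linear relationships to conclude certain set function properties of $H(S)$ using those of $P(x,S)$.''

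First, fixing $x\in\Omega$, I would verify that $P_J(x,\cdot):2^X\to\R$ is polymatroid. Normalization is immediate from \eqref{Eq:SetJointDetection} since the empty product equals one, giving $P_J(x,\emptyset)=0$. A direct computation yields the marginal $\Delta P_J(x,y\vert S)=p(x,y)\prod_{s_i\in S}(1-p(x,s_i))$, which is nonnegative because $p(x,\cdot)\in[0,1]$ (monotonicity). Moreover, because each factor $(1-p(x,s_i))$ lies in $[0,1]$, the product is non-increasing in $S$, so the marginal itself is non-increasing in $S$, which is the diminishing-returns form of submodularity.

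Next, I would handle $P_M(x,\cdot)$ under the natural convention $P_M(x,\emptyset)=0$. Normalization holds by this convention, and monotonicity is obvious because maximizing over a larger collection cannot decrease the value. For submodularity I would write the marginal explicitly as $\Delta P_M(x,y\vert S)=\max\{p(x,y)-P_M(x,S),\,0\}$; since $P_M(x,S)$ is non-decreasing in $S$, the argument inside the max is non-increasing in $S$, and composing with the non-decreasing map $\max\{\cdot,0\}$ preserves this, so the marginal is non-increasing in $S$. Applying Lemma \ref{Lm:LinearityOfSubmodularity} with the nonnegative weights $\theta$ and $1-\theta$ then shows that $P(x,\cdot)$ defined in \eqref{Eq:SetDetectionFunction} is polymatroid for every fixed $x\in\Omega$.

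Finally, I would lift this pointwise property to $H(S)$. Since $R(x)\ge 0$, for each $x$ the map $S\mapsto R(x)P(x,S)$ is a nonnegative scaling of a polymatroid function, hence itself polymatroid. The three defining conditions in Definition \ref{Def:Submodularity} are linear (in)equalities on set function values, so they are preserved under integration against the nonnegative measure $R(x)\,dx$; equivalently, one may view the integral as a limit of nonnegative finite linear combinations and apply Lemma \ref{Lm:LinearityOfSubmodularity} directly. I expect the main obstacle to be the cleanest justification of the submodularity of $P_M$, since, unlike $P_J$, its marginal has no nice product form and requires the monotone-composition argument above; once that is in place, the rest is a straightforward integration-preserves-inequalities step.
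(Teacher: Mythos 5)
Your proposal is correct and follows essentially the same route as the paper's proof: decompose $H$ via the integral and the weights $\theta,1-\theta$ into the pointwise components $P_J(x,\cdot)$ and $P_M(x,\cdot)$, compute the same two marginal expressions $p(x,y)\prod_{s_i\in S}(1-p(x,s_i))$ and $\max\{p(x,y)-\max_{s_i\in S}p(x,s_i),0\}$, observe that each is nonnegative and non-increasing in $S$, and lift the polymatroid property back to $H$ through Lemma \ref{Lm:LinearityOfSubmodularity}. Your explicit monotone-composition justification for the submodularity of $P_M$ and your remark that integration against $R(x)\,dx\ge 0$ preserves the defining linear inequalities are slightly more careful renderings of steps the paper states informally, but they do not constitute a different argument.
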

\begin{proof}
From \eqref{Eq:SetCoverageFunction} and \eqref{Eq:SetDetectionFunction}, it is clear that the set coverage function $H(S)$ can be viewed as a linear combination of set detection function components $P_J(x,S)$ \eqref{Eq:SetJointDetection} and $P_M(x,S)$ \eqref{Eq:SetMaxDetection} - with respect to the common ground set $X$. Therefore, in light of Lm. \ref{Lm:LinearityOfSubmodularity}, to prove $H(S)$ a polymatroid set function, we only have to show $P_J(x,S)$ and $P_M(x,S)$ are polymatroid set functions with respect to any feasible space point $x\in\Phi$. 

Let us first consider the set function $P_J(x,S)$ \eqref{Eq:SetJointDetection}. By definition, $P_J(x,\emptyset) = 0$, and thus, $P_J(x,S)$ is normalized. The corresponding marginal function \eqref{Eq:MarginalDetection} can be derived as:
\begin{align}
\Delta P_J(x,y \vert A) =&\ P_J(x,A\cup\{y\}) - P_J(x,A) \nonumber\\    
=&\ -\prod_{s_i\in A\cup\{y\}}(1-p(x,s_i)) +\prod_{s_i\in A}(1-p(x,s_i)) \nonumber\\
=&\ p(x,y)\prod_{s_i\in A}(1-p(x,s_i)), \label{Eq:Th:SetCoverageSubmodularityStep1}
\end{align}
for any $A \subset X$ and $y\in X\backslash A$. 
From \eqref{Eq:Th:SetCoverageSubmodularityStep1}, it is clear that $\Delta P_J(x,y \vert A) \geq 0$ for all $A \subset X$ and $y\in X\backslash A$. Therefore, $P_J(x,S)$ is monotone. Note also that the product term in \eqref{Eq:Th:SetCoverageSubmodularityStep1} diminishes with the growth of the set $A$. This property can be used to conclude that $\Delta P_J(x,y \vert A) \leq \Delta P_J(x,y \vert B)$, for all $y,A,B$ where $B\subseteq A \subset X$ and $y\in Y\backslash A$. Hence $P_J(x,S)$ is submodular. Therefore, $P_J(x,S)$ is a polymatroid set function.

Finally, let us consider the set function $P_M(x,S)$ \eqref{Eq:SetMaxDetection}. Again, by definition, $P_M(x,\emptyset) = 0$ implying that $P_M(x,S)$ is normalized. The corresponding marginal function \eqref{Eq:MarginalDetection} can be derived as:
\begin{align}
\Delta P_M(x,y \vert A) =&\ P_M(x,A\cup\{y\}) - P_M(x,A) \nonumber\\    
=&\ \max_{s_i\in A\cup\{y\}} p(x,s_i) - \max_{s_i\in A} p(x,s_i)\nonumber\\
=&\ \max\{p(x,y)-\max_{s_i\in A} p(x,s_i),0\}, \label{Eq:Th:SetCoverageSubmodularityStep2}
\end{align}
for any $A \subset X$ and $y\in X\backslash A$. From \eqref{Eq:Th:SetCoverageSubmodularityStep2}, it is clear that $\Delta P_M(x,y \vert A) \geq 0$ for all $A \subset X$ and $y\in X\backslash A$. Therefore, $P_M(x,S)$ is monotone. Note also that the first term inside the outer max operator in \eqref{Eq:Th:SetCoverageSubmodularityStep2} diminishes with the growth of the set $A$. This property implies that $\Delta P_M(x,y \vert A) \leq \Delta P_M(x,y \vert B)$, for all $y,A,B$ where $B \subseteq A \subset X$ and $y \in Y\backslash A$. Hence $P_M(x,S)$ is submodular. Consequently, $P_M(x,S)$ is a polymatroid set function. This completes the proof.
\end{proof}

As a direct result of this polymatroid nature of the set coverage function $H(S)$ \eqref{Eq:SetCoverageFunction}, we can characterize the proximity of the performance of the greedy solution (i.e., $H(S^G)$) to that of the globally optimal solution (i.e., $H(S^*)$). For this characterization, we particularly use the notion of a \emph{performance bound}, (denoted by $\beta$) defined as a theoretically established lower bound for the ratio $\frac{H(S^G)}{H(S^*)}$, i.e., 
\begin{equation}\label{Eq:Def:PerformanceBound}
    \beta \leq \frac{H(S^G)}{H(S^*)}.
\end{equation}
Having a performance bound $\beta$ close to $1$ implies that the performance of the greedy solution is close to that of the global optimal solution. Consequently, $\beta$ can also serve as an indicator of the effectiveness of the greedy approach to solve the interested optimal coverage problem \eqref{Eq:SetCoverageProblem}.   

The seminal work \cite{Nemhauser1978} has established a performance bound (henceforth called the \emph{fundamental performance bound}, and denoted by $\beta_f$) for polymatroid set function maximization problems, which, in light of Th. \ref{Th:SetCoverageSubmodularity}, is applicable to the optimal coverage problem \eqref{Eq:SetCoverageProblem} as: 
\begin{equation}\label{Eq:FundamentalPerformanceBound}
    \beta_f \triangleq 1-\left(1-\frac{1}{N}\right)^N \leq \frac{H(S^G)}{H(S^*)}.
\end{equation}
Note that, while $\beta_f$ decreases with the number of agents $N$, it is lower-bounded by $1-\frac{1}{e} \simeq 0.6321$, because $\lim_{N\rightarrow\infty} \beta_f = (1-\frac{1}{e})$. This implies that the coverage performance of the greedy solution will always be not worse than 63.21\% of the maximum achievable coverage performance.

Moreover, as shown in \cite{WelikalaJ02021}, upon obtaining the greedy solution $S^G$ from Alg. \ref{Alg:GreedyAlgorithm}, any subsequently improved solution $\bar{S}^G\in\mathcal{I}^N$ (e.g., via a gradient process \cite{Sun2019,Sun2020} or an interchange scheme \cite{Welikala2019P3}), will have an improved performance bound $\bar{\beta}$ than the original performance bound $\beta$ such that  
$$\beta \leq \bar{\beta} \triangleq \beta * \frac{H(\bar{S}^G)}{H(S^G)} \leq \frac{H(\bar{S}^G)}{H(S^*)}.$$ 

Besides (and independently from) improving the original greedy solution, as we will see in the next section, an improved performance bound can also be achieved by exploiting certain characteristics (called \emph{curvature measures}) of the interested set function maximization problem. 

Before moving on, we provide a minor technical lemma along with a theorem that establishes the polymatroid nature of the marginal coverage function $\Delta H (B \vert A)$ with respect to both of its set arguments $A$ and $B$. 

\begin{lemma}\label{Lm:MaxMinusMax}
For any $a,b \in \R$, $\max \{a,b\} - \max\{c,d\} = \max\{\min\{a-c,a-d\},\min\{b-c,b-d\}\}$.
\end{lemma}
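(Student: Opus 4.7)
The plan is to avoid any case analysis by rewriting each inner minimum on the right-hand side as a single maximum, then pulling the resulting common constant out of the outer maximum. The key elementary identity I will use is that for any real numbers $u, v, k$,
$$\min\{u+k,\, v+k\} = \min\{u,v\} + k, \qquad \max\{u+k,\, v+k\} = \max\{u,v\} + k,$$
together with the standard duality $\min\{u,v\} = -\max\{-u,-v\}$.

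Applying these to the first inner minimum with the ``constant'' $a$ factored out gives
$$\min\{a-c,\, a-d\} = a + \min\{-c,-d\} = a - \max\{c,d\},$$
and analogously $\min\{b-c,\, b-d\} = b - \max\{c,d\}$. Substituting these two expressions into the right-hand side of the claimed identity yields
$$\max\bigl\{\min\{a-c, a-d\},\, \min\{b-c, b-d\}\bigr\} = \max\bigl\{a - \max\{c,d\},\, b - \max\{c,d\}\bigr\}.$$
Now the constant $-\max\{c,d\}$ can be pulled out of the outer $\max$, producing $\max\{a,b\} - \max\{c,d\}$, which is exactly the left-hand side. This completes the proof in two algebraic steps.

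There is essentially no hard step here: the entire argument is a two-line manipulation that sidesteps the four-case enumeration one might otherwise resort to. The only minor point worth flagging is the implicit extension of the lemma's hypothesis to $c, d \in \R$ as well (the stated ``for any $a, b \in \R$'' should be read as ``for any $a, b, c, d \in \R$'', which the identity clearly requires). No macros beyond the already-defined $\R$ are needed, and no results from earlier in the paper are invoked.
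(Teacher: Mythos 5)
Your argument is correct and is essentially the paper's own proof read in reverse: the paper starts from the left-hand side, pulls $-\max\{c,d\}$ inside the outer $\max$ as $\min\{-c,-d\}$, and distributes, while you start from the right-hand side and collapse each inner minimum to $a-\max\{c,d\}$ (resp.\ $b-\max\{c,d\}$) before factoring the constant back out. The observation that the hypothesis should read ``for any $a,b,c,d\in\R$'' is a fair minor correction to the statement.
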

\begin{proof}
The proof follows from simplifying the left-hand side (LHS) of the given relationship:
\begin{align*}
\mbox{LHS} =&\ \max\{a-\max\{c,d\},b-\max\{c,d\}\}\\
=&\ \max\{a+\min\{-c,-d\},b+\min\{-c,-d\}\}\\
=&\ \max\{\min\{a-c,a-d\},\min\{b-c,b-d\}\}.
\end{align*}
\end{proof}

\begin{theorem}\label{Th:SubmodularityOfMarginalCoverage}
For a fixed set $A \subset X$, the marginal coverage function $G_A(B) \triangleq \Delta H (B \vert A)$ is a polymatroid set function over $B \subseteq X \backslash A$. Moreover, for a fixed set $B \subset X$, the affine negated marginal coverage function $G_B(A) \triangleq -\Delta H (B \vert A)+H(B)$ is a polymatroid set function over $A \subseteq X \backslash B$.
\end{theorem}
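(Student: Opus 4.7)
My plan is to treat the two parts separately. The first inherits directly from Theorem \ref{Th:SetCoverageSubmodularity}, while the second requires exploiting the specific structure of the detection components $P_J(x,\cdot)$ and $P_M(x,\cdot)$ already unpacked in the proof of that theorem. For the first claim, I would rewrite $G_A(B) = H(A\cup B) - H(A)$ so that for any $B \subseteq X\backslash A$ and any $y \in (X\backslash A)\backslash B$, the marginal takes the form $\Delta G_A(y \vert B) = \Delta H(y \vert A \cup B)$. Normalization is immediate since $G_A(\emptyset) = 0$, while monotonicity and submodularity of $G_A$ on $2^{X\backslash A}$ follow by specializing the corresponding properties of $H$ (Theorem \ref{Th:SetCoverageSubmodularity}) to supersets of $A$.

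For the second claim, I would first rewrite $G_B(A) = H(A) + H(B) - H(A\cup B)$ and use the linear decomposition \eqref{Eq:SetCoverageFunction}-\eqref{Eq:SetDetectionFunction} to express
\begin{equation*}
G_B(A) = \theta \int_\Omega R(x)\, g_B^J(x,A)\,dx + (1-\theta) \int_\Omega R(x)\, g_B^M(x,A)\,dx,
\end{equation*}
where $g_B^J(x,A) \triangleq P_J(x,A)+P_J(x,B)-P_J(x,A\cup B)$ and $g_B^M(x,A) \triangleq P_M(x,A)+P_M(x,B)-P_M(x,A\cup B)$. Since $R(x) \geq 0$, an integral analogue of Lemma \ref{Lm:LinearityOfSubmodularity} reduces the task to showing that for every fixed $x \in \Omega$, both $g_B^J(x,\cdot)$ and $g_B^M(x,\cdot)$ are polymatroid set functions on $2^{X\backslash B}$. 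For $g_B^J(x,\cdot)$, the product identity $1-P_J(x,S) = \prod_{s_i \in S}(1-p(x,s_i))$ together with $A \cap B = \emptyset$ gives the closed form $g_B^J(x,A) = P_J(x,A)\cdot P_J(x,B)$, a non-negative scaling of the polymatroid $P_J(x,\cdot)$ established in the proof of Theorem \ref{Th:SetCoverageSubmodularity}. For $g_B^M(x,\cdot)$, the scalar identity $a+b-\max(a,b) = \min(a,b)$, a special case of Lemma \ref{Lm:MaxMinusMax}, reduces the expression to $g_B^M(x,A) = \min\{P_M(x,A),\,P_M(x,B)\}$.

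The main obstacle is verifying that capping the polymatroid $P_M(x,\cdot)$ at the non-negative constant $c = P_M(x,B)$ preserves the polymatroid property, since standard closure rules do not directly cover a $\min$-with-constant operation. I would handle this with a short case analysis on whether each of $P_M(x,A_1)$, $P_M(x,A_1\cup\{y\})$, $P_M(x,A_2)$, and $P_M(x,A_2\cup\{y\})$ lies above or at/below $c$, invoking submodularity of $P_M(x,\cdot)$ only in the sub-case where all four fall at or below $c$ (the other sub-cases collapse using monotonicity alone). Once this capping argument is in place, the integral extension of Lemma \ref{Lm:LinearityOfSubmodularity} combines the two components into a polymatroid $G_B(A)$, completing the proof.
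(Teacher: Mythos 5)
Your proposal is correct, and its skeleton matches the paper's: both reduce the problem, via the non-negative weight $R(x)$ in \eqref{Eq:SetCoverageFunction}, to verifying the polymatroid property pointwise for the $P_J(x,\cdot)$ and $P_M(x,\cdot)$ components. The differences are in execution. For the first claim you re-derive directly (via $\Delta G_A(y\vert B)=\Delta H(y\vert A\cup B)$) what the paper simply imports from an external lemma of \cite{WelikalaJ02021}; that is fine. For the second claim the paper computes the marginal differences $\Delta P_J(x,B\vert C)-\Delta P_J(x,B\vert A)$ and $\Delta P_M(x,B\vert C)-\Delta P_M(x,B\vert A)$ explicitly (the latter via Lm.~\ref{Lm:MaxMinusMax}) and checks monotonicity and diminishing returns by inspecting how these expressions change when $y$ is adjoined to both $A$ and $C$. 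You instead extract the closed forms $g_B^J(x,A)=P_J(x,B)\,P_J(x,A)$ and $g_B^M(x,A)=\min\{P_M(x,A),P_M(x,B)\}$ and invoke two closure facts: non-negative scaling and truncation at a constant preserve the polymatroid property. The underlying identities are the same ones the paper derives --- its $\Delta P_J(x,B\vert C)=P_J(x,B)(1-P_J(x,C))$ and its $\min\{P_A,P_B\}-\min\{P_C,P_B\}$ expression are exactly your two closed forms in marginal-difference disguise --- so the mathematical content coincides, but your packaging is cleaner and makes the structure (a product of detection probabilities, and a capped max) visible; the price is that you must supply the truncation lemma, which the paper's direct computation avoids. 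One small correction to your sketch of that lemma: submodularity of $P_M(x,\cdot)$ is needed not only when all four values lie at or below $c=P_M(x,B)$, but also in the mixed case $f(A)<c<f(A\cup\{y\})$ with $f(B\cup\{y\})\le c$, where one closes the case via $c-f(A)<\Delta f(y\vert A)\le\Delta f(y\vert B)$; monotonicity alone does not suffice there. The lemma is nonetheless true and the full case analysis goes through, so this is an imprecision in the plan rather than a gap.
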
 
\begin{proof}
The first result directly follows from the polymatroid nature of the set coverage function $H(S)$  \eqref{Eq:SetCoverageFunction} established in Th. \ref{Th:SetCoverageSubmodularity} and the theoretical result in \cite[Lm. 1]{WelikalaJ02021}.

To establish the second result, first, note that $G_B(\emptyset) = -\Delta H (B \vert \emptyset)+H(B) = -H(B)+H(\emptyset)+H(B) = 0$. Therefore the set function $G_B(\cdot)$ is normalized. 

Second, consider a set $C \subseteq A \subseteq X\backslash B$. Then, 
$G_B(A)-G_B(C) = \Delta H (B \vert C) - \Delta H (B \vert A)$. To deduce the sign of $\Delta H (B \vert C) - \Delta H (B \vert A)$, we need to inquire about the sign of $\Delta P(x,B\vert C) - \Delta P(x, B \vert A)$ for all $x \in \Omega$. Note that
\begin{align*}
\Delta P_J(x,B\vert C) 
=&\ P_J(x,B\cup C) - P_J(x,C) \\
=&\ \prod_{s_i\in C}(1-p(x,s_i)) - \prod_{s_i\in B\cup C} (1-p(x,s_i))\\
=&\ (1-\prod_{s_i\in B}(1-p(x,s_i)))\prod_{s_i\in C} (1-p(x,s_i))
\end{align*}
where the last step is due to $C\subseteq X \backslash B$. Similarly, we get 
$$
\Delta P_J(x,B\vert A) = (1-\prod_{s_i\in B}(1-p(x,s_i)))\prod_{s_i\in A} (1-p(x,s_i)).
$$
Since $C\subseteq A$, from the above two results, it is clear that $\Delta P_J(x,B\vert C) - \Delta P_J(x, B \vert A) \geq 0$ for all $x \in \Omega$. 
Note also that,
\begin{align*}
    \Delta P_M(x,B \vert C) =&\ P_M(x,B\cup C) - P_M(x,C)\\
    =&\ \max_{s_i \in B\cup C} p(x,s_i)  - \max_{s_i\in C} p(x,s_i)\\
    =&\ \max\{\max_{s_i \in B} p(x,s_i) - \max_{s_i \in C} p(x,s_i),0\}
\end{align*}
where the last step is due to $C\subseteq X \backslash B$. Similarly, we get 
$$
\Delta P_M(x,B \vert A) = 
\max\{\max_{s_i \in B} p(x,s_i) - \max_{s_i \in A} p(x,s_i),0\}.
$$
Since $C\subseteq A$, from the above two results, we get $P_M(x,B\vert C) - \Delta P_M(x, B \vert A) \geq 0$ for all $x \in \Omega$. Using the above two main conclusions with \eqref{Eq:SetDetectionFunction}, we get $\Delta P(x,B\vert C) - \Delta P(x, B \vert A)\geq$ for all $x \in \Omega$. This result, together with \eqref{Eq:SetCoverageFunction}, implies that $\Delta H (B \vert C) - \Delta H (B \vert A) \geq 0$. Therefore, the set function $G_B(\cdot)$ is monotone.

Finally, let us consider an element $y\in (X\backslash B)\backslash A$ and a set $C \subseteq A \subseteq X\backslash B$. For submodularity of $G_B(\cdot)$, we require 
\begin{align*}
&G_B(A\cup\{y\})-G_B(A) \leq G_B(C\cup\{y\})-G_B(C)\\
&\iff G_B(A\cup\{y\})-G_B(C\cup\{y\}) \leq G_B(A)-G_B(C)\\
&\iff \Delta H (B \vert C\cup\{y\}) - \Delta H (B \vert A\cup\{y\}) \\
&\quad\quad\quad\quad \quad\quad\quad\quad \quad\quad\quad\quad
\leq \Delta H (B \vert C) - \Delta H (B \vert A)\\
&\iff \Delta P(x,B\vert C\cup\{y\}) - \Delta P(x, B \vert A \cup \{y\}) \\
&\quad\quad\quad\quad \quad\quad\quad\quad 
\leq \Delta P(x,B\vert C) - \Delta P(x, B \vert A), \forall x\in\Omega.
\end{align*}
Note that, using the previously obtained $P_J(x,B\vert C)$ and $\Delta P_J(x,B\vert A)$ expressions, we get 
\begin{align*}
\Delta P_J(x,B\vert C)-\Delta P_J(x,B\vert A) = 
(1-\prod_{s_i\in B}(1-p(x,s_i)))\\
\times (1-\prod_{s_i\in A \backslash C} (1-p(x,s_i)))\prod_{s_i\in C} (1-p(x,s_i)).
\end{align*}
In this expression, if $A\rightarrow A\cup\{y\}$ and $C \rightarrow C \cup\{y\}$, only the last product term changes; in particular, it decreases, and so does the overall expression.
Note also that, using the previously obtained $P_M(x,B\vert C)$ and $\Delta P_M(x,B\vert A)$ terms and the notation $P_S \triangleq \max_{s_i\in S} p(x,s_i)$, we get (also using Lm. \ref{Lm:MaxMinusMax})
\begin{align*}
\Delta P_M(x,B\vert C) &- \Delta P_M(x,B\vert A) \\
=&\ \max\{P_B-P_C,0\}-\max\{P_B-P_A,0\}\\
=&\  \max\{\min\{P_A-P_C,P_B-P_C\},\min\{P_A-P_B,0\}\}\\
=&\ \max\{\min\{P_A,P_B\}-P_C,\min\{P_A,P_B\}-P_B\}\\
=&\ \min\{P_A,P_B\} - \min\{P_C,P_B\}.
\end{align*}
In this expression, if $A\rightarrow A\cup\{y\}$ and $C \rightarrow C \cup\{y\}$, the increment in $P_C$ will be larger than the increment in $P_A$ as $C \subseteq A$, and thus the overall expression decreases. 
Using these two conclusions with \eqref{Eq:SetDetectionFunction}, we get 
$\Delta P(x,B\vert C\cup\{y\}) - \Delta P(x, B \vert A \cup \{y\}) \leq \Delta P(x,B\vert C) - \Delta P(x, B \vert A), \forall x\in\Omega$, which implies that $G_B(\cdot)$ is submodular. 

Consequently, $G_B(\cdot)$ is a polymatroid set function. This completes the proof.
\end{proof}

The above result further emphasizes the deep polymatroid characteristics of optimal coverage problems. Moreover, as we will see in the sequel, it enables the computation of efficient estimates for some curvature measures introduced in the next section.

\section{Improved Performance Bound Guarantees using Curvature Measures}
\label{Sec:CurvatureMeasures}

In this section, we will discuss several improved performance-bound guarantees that are applicable to the considered optimal coverage problem \eqref{Eq:SetCoverageProblem} and its greedy solution $S^G$ given by Alg. \ref{Alg:GreedyAlgorithm}. The goal is to obtain tighter performance bounds for $S^G$, i.e., closer to 1 compared to $\beta_f$ in \eqref{Eq:FundamentalPerformanceBound}. This is important as such a performance bound will accurately characterize the proximity of $S^G$ to $S^*$, and thus allow making informed decisions regarding spending extra resources (e.g., computational power, agents and sensing capabilities) to seek a further improved coverage solution beyond $S^G$.

As mentioned earlier, curvature measures are used to obtain such improved performance bounds. These curvature measures are dependent purely on the underlying objective function, the ground set, and the feasible space, which, in the considered optimal coverage problem, are $H(S)$, $X$, and $\mathcal{I}^N$, respectively. In this section, we will review five established curvature measures and their respective performance bounds, outlining their unique characteristics, strengths, and weaknesses in their application to the considered optimal coverage problem \eqref{Eq:SetCoverageProblem}.

\subsection{Total Curvature \cite{Conforti1984}}
\label{SubSec:TotalCurvature}

By definition, the \emph{total curvature} of \eqref{Eq:SetCoverageProblem} is given by  
\begin{equation}\label{Eq:TotalCurvatureCoefficientTheory}
    \alpha_t \triangleq \max_{\substack{y \in X} }\left[1 - \frac{\Delta H(y \vert X \backslash \{y\})}{\Delta H(y \vert \emptyset)}\right].
\end{equation}
The corresponding performance bound $\beta_t$ is given by
\begin{equation}\label{Eq:TotalCurvatureBoundTheory}
    \beta_t \triangleq \frac{1}{\alpha_t} \left[ 1 - \left( 1 -\frac{\alpha_t}{N} \right)^N \right] \leq \frac{H(S^G)}{H(S^*)}.
\end{equation}

From comparing \eqref{Eq:TotalCurvatureBoundTheory} and \eqref{Eq:FundamentalPerformanceBound}, it is clear that when $\alpha_t\rightarrow 1$, the corresponding performance bound $\beta_t\rightarrow\beta_f$ (i.e., no improvement). However, on the other hand, when $\alpha_t\rightarrow 0$, the corresponding performance bound $\beta_t\rightarrow 1$ (i.e., a significant improvement). Moreover, it can be shown that $\beta_t$ is monotonically decreasing in $\alpha_t$. Using the above three facts and \eqref{Eq:TotalCurvatureCoefficientTheory}, it is easy to see that the improvement in the performance bound is proportional to the magnitude of:
\begin{equation}\label{Eq:TotolCurvatureGamma}
    \gamma_t \triangleq \min_{y \in X}\left[\frac{\Delta H(y \vert X \backslash \{y\})}{\Delta H(y \vert \emptyset)}\right].
\end{equation}
The diminishing returns (submodularity) property of $H$ implies $\frac{\Delta H(y \vert X \backslash \{y\})}{\Delta H(y \vert \emptyset)} \leq 1, \forall y\in X$. Therefore, $\gamma_t$ is large only when $\frac{\Delta H(y \vert X \backslash \{y\})}{\Delta H(y \vert \emptyset)} \simeq 1, \forall y\in X$. In other words, a significantly improved performance bound from the total curvature measure can only be obtained when $H$ is just ``weakly'' submodular (i.e., when $H$ is closer to being modular rather than submodular). This is also clear from simplifying the condition $\frac{\Delta H(y \vert X \backslash \{y\})}{\Delta H(y \vert \emptyset)} \simeq 1, \forall y \in X$ using \eqref{Eq:MarginalCoverage}, as it leads to
\begin{equation}
\label{Eq:ConditionForWeakSubmodularity}
     H(X) \simeq H(y) + H(X\backslash \{y\}), \quad \mbox{for all } y\in X
\end{equation}
which holds whenever $H$ is modular. 

In particular, as $H$ is the set coverage function \eqref{Eq:SetCoverageFunction}, the above condition \eqref{Eq:ConditionForWeakSubmodularity} holds (leading to improved performance bounds) when an agent deployed at any $y\in X$ and all other agents deployed at $X \backslash \{y\}$ contribute to the coverage objective independently in a modular fashion. This happens when the ground set $X$ is very sparse and/or when the agents have significantly weak non-overlapping sensing capabilities (i.e., small range $\delta$ and high decay $\lambda$ in \eqref{Eq:SensingFunction}). 

However, the condition \eqref{Eq:ConditionForWeakSubmodularity} is easily violated (leading to poor performance bounds) if 
$$
H(X) \ll H(y) + H(X\backslash \{y\}), \quad \mbox{ for some } y\in X. 
$$
To interpret this condition using \eqref{Eq:SetCoverageFunction}, we need to consider the corresponding detection function \eqref{Eq:SetDetectionFunction} requirement: 
$$
P(x,X) \ll P(x,\{y\}) + P(x,X\backslash \{y\}), \quad \mbox{for some } y\in X
$$
for a majority of $x\in\Phi$. Now, using \eqref{Eq:Th:SetCoverageSubmodularityStep1} and \eqref{Eq:Th:SetCoverageSubmodularityStep2}, we get 
\begin{align*}
0\ll&\ \theta(p(x,y)(1-\prod_{s_i\in X\backslash \{y\}}(1-p(x,s_i)))\\
&+(1-\theta) (p(x,y)-\max\{p(x,y)-\max_{s_i\in X\backslash \{y\}}p(x,s_i),0\}),
\end{align*}
where the second term can be further simplified to obtain:
\begin{align*}
0\ll&\ \theta(p(x,y)(1-\prod_{s_i\in X\backslash \{y\}}(1-p(x,s_i)))\\
&+(1-\theta) (\min\{\max_{s_i\in X\backslash \{y\}}p(x,s_i),p(x,y)\}).  
\end{align*}
Since $\theta \in [0,1]$, we need to consider both terms above separately. However, both terms lead to the same condition (under which the above requirement holds):
$$0 \ll p(x,y) \mbox{ and } 0 \ll p(x,s_i), \mbox{ for some } s_i \in X\backslash \{y\}.$$
In all, the total curvature measure leads to poor performance bounds when there exists some $y\in X$ and $s_i\in X\backslash \{y\}$ so that 
$$0\ll p(x,y) \simeq p(x,s_i) \simeq 1,$$
for many feasible space locations $x\in \Phi$. Evidently, this requirement holds when the ground set $X$ is dense and when the agents have significantly strong overlapping sensing capabilities (i.e., large range $\delta$ and small decay $\lambda$ in \eqref{Eq:SensingFunction}).

One final remark about the total curvature measure is that it requires an evaluation of $H(X)$ and $M(\triangleq \vert X\vert$) evaluations of $H(X\backslash \{y\})$ terms (i.e., for all $y\in X$). In certain coverage applications, this might be ill-defined \cite{Sun2020} and computationally expensive as often $H(S)$ is of the complexity $O(\vert S \vert)$.

\subsection{Greedy Curvature \cite{Conforti1984}}
\label{SubSec:GreedyCurvature}

The \emph{greedy curvature} of \eqref{Eq:SetCoverageProblem} is given by   
\begin{equation}\label{Eq:GreedyCurvatureCoefficientTheory}
    \alpha_g \triangleq \max_{0 \leq i \leq N-1} \left[ \max_{y \in X^i}\left(1 - \frac{\Delta H(y\vert S^i)}{\Delta H(y\vert \emptyset)}\right) \right],
\end{equation}
where $X^i \triangleq \{y: y \in X \backslash S^i, (S^i \cup \{y\}) \in \mathcal{I}^N\}$ (i.e., the set of feasible options at the $(i+1)$\tsup{th} greedy iteration). The corresponding performance bound $\beta_g$ is given by  
\begin{equation}\label{Eq:GreedyCurvatureBoundTheory}
    \beta_g \triangleq 1-\alpha_g\left(1-\frac{1}{N}\right) \leq \frac{H(S^G)}{H(S^*)}. 
\end{equation}

Note that $\beta_g$ is a monotonically decreasing function in $\alpha_g$, and due to the submodularity of $H$, $0 \leq \alpha_g \leq 1$. Consequently, as $\alpha_g \rightarrow 0$, $\beta_g \rightarrow 1$, and on the other hand, as $\alpha_g \rightarrow 1$, $\beta_g \rightarrow \frac{1}{N}$ (which may be worse than $\beta_f$, when $\frac{1}{N} < \beta_f$). Using these facts and \eqref{Eq:GreedyCurvatureCoefficientTheory}, it is easy to see that the improvement in the performance bound is proportional to the magnitude of 
\begin{equation}\label{Eq:GreedyCurvatureGamma}
    \gamma_g \triangleq \min_{0 \leq i \leq N-1} \left[ \min_{y \in X^i}\left(\frac{\Delta H(y\vert S^i)}{\Delta H(y\vert \emptyset)}\right) \right].
\end{equation}
Similar to before, the diminishing returns property of $H$ implies that $\gamma_g$ is large only when 
$\frac{\Delta H(y\vert S^i)}{\Delta H(y\vert \emptyset)} \simeq 1, \forall y\in X^i, i\in \{0,1,2,...,N-1\}$. In other words, similar to the total curvature, the greedy curvature provides a significantly improved performance bound when $H$ is weakly submodular.

In fact, as reported in \cite{Sun2020}, when $H$ is significantly weakly submodular, it can provide better performance bounds even compared to those provided by the total curvature, i.e., $\beta_f \ll \beta_t \leq \beta_g \simeq 1$. This observation can be theoretically justified using \eqref{Eq:TotolCurvatureGamma} and \eqref{Eq:GreedyCurvatureGamma} as follows. Due to submodularity, $\Delta H(y \vert X \backslash \{y\}) \leq \Delta H(y\vert S^i)$ for any $y$ and $S^i$, and thus, $\gamma_t \leq \gamma_g$. This, with weak submodularity of $H$ leads to $\alpha_t \geq \alpha_g \simeq 0$. Now, noticing that the growth of $\beta_g$ is faster as $\alpha_g \rightarrow 0$ compared to that of $\beta_t$ as $\alpha_t \rightarrow 0$, we get $\beta_f \ll \beta_t \leq \beta_g \simeq 1$.

We can follow the same steps and arguments as before to show that such improved performance bounds can only be achieved when the ground set is sparse and/or when the agents have weak sensing capabilities. On the other hand, when the ground set is dense and when the agents have strong sensing capabilities, greedy curvature provides poor performance bounds (often, it may even be worse than $\beta_f$). 

However, compared to the total curvature, greedy curvature has some more redeeming qualities: it is always fully defined, and it can be computed efficiently using only the evaluations of $H$ executed in the greedy algorithm.

\subsection{Elemental Curvature \cite{Wang2016}}\label{SubSec:ElementalCurvature}

The \emph{elemental curvature} of \eqref{Eq:SetCoverageFunction} is given by 
\begin{equation}\label{Eq:ElementalCurvatureCoefficientTheory}
    \alpha_e \triangleq \max_{\substack{(S,y_i,y_j): S \subset X,\\ y_i,y_j \in X \backslash S,\ y_i \neq y_j.}}\left[\frac{\Delta H(y_i \vert S \cup \{y_j\})}{\Delta H(y_i \vert S)}\right].
\end{equation}
The corresponding performance bound $\beta_e$ is given by 
\begin{equation}\label{Eq:ElementalCurvatureBoundTheory}
    \beta_e \triangleq 1-\left(\frac{\alpha_e + \alpha_e^2 + \cdots + \alpha_e^{N-1}}{1 + \alpha_e + \alpha_e^2 + \cdots + \alpha_e^{N-1}}\right)^N \leq \frac{H(S^G)}{H(S^*)}.
\end{equation}

It can be shown that $\beta_e$ is a monotonically decreasing function in $\alpha_e$, and due to the submodularity of $H$, $0 \leq \alpha_e \leq 1$. Consequently, when $\alpha_e \rightarrow 0$, $\beta_e \rightarrow 1$ and when $\alpha_e \rightarrow 1$, $\beta_e \rightarrow \beta_f$ (the latter is unlike $\beta_g$, but similar to $\beta_t$).

Since $H$ is submodular, according to  \cite[Prop. 2.1]{Nemhauser1978}, for all feasible $(S,y_i,y_j)$ considered in \eqref{Eq:ElementalCurvatureCoefficientTheory}, $\frac{\Delta H(y_i \vert S \cup \{y_j\})}{\Delta H(y_i \vert S)} \leq 1$. Therefore, based on \eqref{Eq:ElementalCurvatureCoefficientTheory}, whenever there exists some feasible $(S,y_i,y_j)$ such that  $\frac{\Delta H(y_i \vert S \cup \{y_j\})}{\Delta H(y_i \vert S)} \simeq 1$, i.e. when $H$ is weakly submodular (or, equivalently, closer to being modular) in that region, the elemental curvature measure will provide poor performance bounds (closer to $\beta_f$). This modularity argument is also evident from considering a simplified case of condition $\frac{\Delta H(y_i \vert S \cup \{y_j\})}{\Delta H(y_i \vert S)} \simeq 1$ assuming $S=\emptyset$, as it leads to
$$
H(\{y_i,y_j\}) \simeq H(\{y_j\}) + H(\{y_i\}), \quad \mbox{for some } y_i,y_j\in X
$$
which holds whenever $H$ is modular.

As we observed before, the coverage function $H$ shows such modular behaviors (leading to poor performance bounds $\beta_e\simeq \beta_f$) when the ground set $X$ is very sparse and/or when agents have significantly weak non-overlapping sensing capabilities. It is worth highlighting that this particular behavior of elemental curvature contrasts from that of the previously discussed total curvature and greedy curvature - where weakly submodular scenarios (with agents having weak sensing capabilities) lead to significantly improved performance bounds $\beta_f \ll \beta_t \leq \beta_g \simeq 1$.

On the other hand, the elemental curvature provides an improved performance bound when $\frac{\Delta H(y_i \vert S \cup \{y_j\})}{\Delta H(y_i \vert S)} \ll 1$ over all feasible $(S,y_i,y_j)$ considered in \eqref{Eq:ElementalCurvatureCoefficientTheory}. To further interpret this condition, we need to consider the corresponding marginal detection function \eqref{Eq:MarginalDetection} requirement:
\begin{equation}\label{Eq:ConditionForStrongSubmodularity}
\Delta P(x,y_i \vert S \cup \{y_j\}) \ll \Delta P(x, y_i \vert S), \quad \forall (S,y_i,y_j)    
\end{equation}
for a majority of $x\in\Phi$. Since each $\Delta P = \theta \Delta P_J + (1-\theta)\Delta P_M$ where $\theta \in [0,1]$, let us first consider the requirement \eqref{Eq:ConditionForStrongSubmodularity} with respect to the $\Delta P_J$ (i.e., when $\theta = 1$ in \eqref{Eq:SetDetectionFunction}) using \eqref{Eq:Th:SetCoverageSubmodularityStep1}:
\begin{align}
    &\Delta P_J(x,y_i \vert S \cup \{y_j\}) \ll \Delta P_J(x, y_i \vert S) \nonumber\\
    &\iff p(x,y_i)\prod_{s_i\in S\cup\{y_j\}}(1-p(x,s_i))\ll 
    p(x,y_i)\prod_{s_i\in S}(1-p(x,s_i))\nonumber\\
    &\iff 0 \ll p(x,y_i)p(x,y_j)\prod_{s_i\in S}(1-p(x,s_i)). \nonumber
\end{align}
Clearly, this condition holds if for all feasible $(S,y_i,y_j)$, 
\begin{equation}\label{Eq:ConditionForStrongSubmodularity1}
    0 \ll p(x,y_i) \simeq p(x,y_j) \simeq 1 \mbox{ with } 0 \simeq p(x,s_i) \ll 1
\end{equation}
for some $s_i\in S$ over many feasible space locations $x\in\Phi$.

Now, let us consider the requirement \eqref{Eq:ConditionForStrongSubmodularity} with respect to the $\Delta P_M$ (i.e., when $\theta = 0$ in \eqref{Eq:SetDetectionFunction}) using \eqref{Eq:SetMaxDetection}:
\begin{align}
    &\Delta P_M(x,y_i \vert S \cup \{y_j\}) \ll \Delta P_M(x, y_i \vert S) \nonumber\\
    & \iff 
    \max_{s_i\in S\cup\{y_j,y_i\}} p(x,s_i) - \max_{s_i\in S\cup\{y_j\}} p(x,s_i) \nonumber \\
    &\quad \quad \quad \ll
    \max_{s_i\in S\cup\{y_i\}} p(x,s_i) - \max_{s_i\in S} p(x,s_i)\nonumber\\
    &\iff \max_{s_i\in S\cup\{y_j,y_i\}} p(x,s_i) + \max_{s_i\in S} p(x,s_i) 
    \nonumber\\ \label{Eq:ConditionForStrongSubmodularity2}
    &\quad \quad \quad \ll \max_{s_i\in S\cup\{y_i\}} p(x,s_i) + \max_{s_i\in S\cup\{y_j\}} p(x,s_i).
\end{align}
For notational convenience, let $P_S \triangleq \max_{s_i\in S} p(x,s_i)$, $P_{y_i}\triangleq p(x,y_i)$ and $P_{y_j}\triangleq p(x,y_j)$. Then, \eqref{Eq:ConditionForStrongSubmodularity2} can be restated as
\begin{align*}
    &\max\{P_S,P_{y_i},P_{y_j}\} + P_S \ll \max\{P_S,P_{y_i}\} + \max\{P_S,P_{y_j}\} \\
    &\iff 
    \max\{2P_S,P_S+P_{y_i},P_S+P_{y_j}\} \\ 
    &\quad \quad \quad \ll \max\{P_S+\max\{P_S,P_{y_j}\},P_{y_i}+\max\{P_S,P_{y_j}\}\}\\
    &\iff 
    \max\{2P_S,P_S+P_{y_i},P_S+P_{y_j}\} \\ 
    &\quad \quad \quad  \ll \max\{\max\{2P_S,P_S+P_{y_j}\},\max\{P_S+P_{y_i},P_{y_i}+P_{y_j}\}\}\\
    &\iff 
    \max\{2P_S,P_S+P_{y_i},P_S+P_{y_j}\} \\ 
    &\quad \quad \quad  \ll \max\{\max\{2P_S,P_S+P_{y_j},P_S+P_{y_i}\},P_{y_i}+P_{y_j}\}\\
    &\iff 
    0 \ll \max\{0,P_{y_i}+P_{y_j} - \max\{2P_S,P_S+P_{y_j},P_S+P_{y_i}\}\}\\
    &\iff \max\{2P_S,P_S+P_{y_j},P_S+P_{y_i}\} \ll  P_{y_i}+P_{y_j}\\
    &\iff P_S + \max\{P_S,P_{y_j},P_{y_i}\} \ll  P_{y_i}+P_{y_j}    
\end{align*}
Since $P_{y_i}$ and $P_{y_j}$ are interchangeable in the above expression, let us denote $P_y \triangleq P_{y_i} \simeq P_{y_j}$. This makes the above condition:
\begin{align}
&P_S + \max\{P_S,P_{y}\} \ll  2P_{y} \nonumber \\
&\iff     
\max\{2P_S-2P_y,P_S-P_y\} \ll 0  \iff P_S \ll P_Y  \nonumber \\
&\iff \max_{s_i \in S} p(x,s_i) \ll p(x,y_i) \simeq p(x,y_j), \nonumber
\end{align}
which leads to the same condition obtained in \eqref{Eq:ConditionForStrongSubmodularity1}.

In all, the elemental curvature measure leads to significantly improved performance bounds when for all $(S,y_i,y_j)$ such that $S \subset X, y_i, y_j \in X\backslash S$ and $y_i \neq y_j$,  
$$
0 \simeq p(x,s_i) \ll p(x,y_i) \simeq p(x,y_j) \simeq 1
$$
for some $s_i \in S$ over many feasible space locations $x\in\Phi$. Clearly, this requirement holds when the ground set $X$ is dense and when the agents have significantly strong overlapping sensing capabilities (i.e., large range $\delta$ and small decay $\lambda$ in \eqref{Eq:SensingFunction}).

Finally, note that the evaluation of the elemental curvature $\alpha_e$ \eqref{Eq:ElementalCurvatureCoefficientTheory} is computationally expensive (even compared to the total curvature) as it involves solving a set function maximization problem (notice the set variable $S$ in \eqref{Eq:ElementalCurvatureCoefficientTheory}). However, as shown in \cite{Sun2019}, there may be special structural properties that can be exploited to obtain at least an upper bound on $\alpha_e$, leading to a lower bound on $\beta_e$ - which would still be a valid performance bound for the optimal coverage problem \eqref{Eq:SetCoverageProblem}. The following proposition serves this purpose.

\begin{proposition}
\label{Pr:ElementalCurvatureBound}
An upper-bound for the elemental curvature $\alpha_e$ in \eqref{Eq:ElementalCurvatureCoefficientTheory} is given by 
\begin{equation}\label{Eq:Pr:ElementalCurvatureBound}
\alpha_e \leq \bar{\alpha}_e \triangleq
1-\left(\min_{\substack{(y_i,y_j,x):y_i\in X, y_j \in X\backslash\{y_i\},\\x\in \Phi, p(x,y_i) \neq 0}} p(x,y_j)\right)\mb{1}_{\{\theta =1\}}.
\end{equation}
\end{proposition}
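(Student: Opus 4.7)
The plan is to handle the two cases $\theta<1$ and $\theta=1$ separately, since the indicator $\mathbf{1}_{\{\theta=1\}}$ makes the bound nontrivial only in the latter case. For $\theta \in [0,1)$ the claimed bound reduces to $\bar{\alpha}_e=1$, which follows immediately from the polymatroid nature of $H(S)$ established in Theorem \ref{Th:SetCoverageSubmodularity}: submodularity gives $\Delta H(y_i\vert S\cup\{y_j\}) \leq \Delta H(y_i\vert S)$ for every admissible triple $(S,y_i,y_j)$, hence $\alpha_e \leq 1$. So the real content of the proposition lies in the $\theta=1$ case (pure joint detection), and that is where I would concentrate the work.

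For $\theta=1$, I would use the explicit formula for $\Delta P_J(x,y_i\vert S)$ derived in equation \eqref{Eq:Th:SetCoverageSubmodularityStep1} inside the proof of Theorem \ref{Th:SetCoverageSubmodularity}. Substituting into \eqref{Eq:SetCoverageFunction} gives
\begin{equation*}
\Delta H(y_i\vert S) = \int_\Omega R(x)\, p(x,y_i)\prod_{s_k\in S}(1-p(x,s_k))\,dx,
\end{equation*}
and a completely analogous expression for $\Delta H(y_i\vert S\cup\{y_j\})$ in which an extra factor $(1-p(x,y_j))$ appears under the integral. Defining the nonnegative weight $f_S(x) \triangleq R(x)\,p(x,y_i)\prod_{s_k\in S}(1-p(x,s_k))$, the key identity is
\begin{equation*}
\frac{\Delta H(y_i\vert S\cup\{y_j\})}{\Delta H(y_i\vert S)} = \frac{\int_\Omega f_S(x)\bigl(1-p(x,y_j)\bigr)dx}{\int_\Omega f_S(x)\,dx},
\end{equation*}
so the ratio is a weighted average of $1-p(x,y_j)$ with weights $f_S(x)\geq 0$.

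Next, I would apply the elementary fact that a weighted average is bounded by the essential supremum over the support of the weights, giving
\begin{equation*}
\frac{\Delta H(y_i\vert S\cup\{y_j\})}{\Delta H(y_i\vert S)} \leq \sup_{x:\,f_S(x)>0}\bigl(1-p(x,y_j)\bigr) = 1 - \inf_{x:\,f_S(x)>0} p(x,y_j).
\end{equation*}
Since $f_S(x)>0$ requires $x\in\Phi$ (from $R(x)$) and $p(x,y_i)\neq 0$ (and some auxiliary conditions on $S$ that only shrink the support further), the set $\{x:f_S(x)>0\}$ is contained in $\{x\in\Phi:p(x,y_i)\neq 0\}$. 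Enlarging the infimum to this larger set only weakens it, so
\begin{equation*}
\frac{\Delta H(y_i\vert S\cup\{y_j\})}{\Delta H(y_i\vert S)} \leq 1 - \inf_{x\in\Phi,\,p(x,y_i)\neq 0} p(x,y_j),
\end{equation*}
and because this bound no longer depends on $S$, taking the maximum over all feasible $(S,y_i,y_j)$ in \eqref{Eq:ElementalCurvatureCoefficientTheory} yields exactly $\bar{\alpha}_e$.

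The main obstacle is the transition from a ratio of integrals to a pointwise bound on the support of $f_S$, together with the care needed to justify enlarging the support to the $S$-free set $\{x\in\Phi:p(x,y_i)\neq 0\}$ without weakening the conclusion. Everything else (the $\theta<1$ reduction and the algebraic manipulation using \eqref{Eq:Th:SetCoverageSubmodularityStep1}) is essentially bookkeeping. A minor technicality to flag is the degenerate case $\Delta H(y_i\vert S)=0$, where the ratio in \eqref{Eq:ElementalCurvatureCoefficientTheory} is conventionally excluded or interpreted as yielding no constraint; this can be handled by noting $f_S\equiv 0$ almost everywhere forces $\Delta H(y_i\vert S\cup\{y_j\})=0$ as well, so such triples do not affect the maximum.
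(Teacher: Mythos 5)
Your proof is correct and follows essentially the same route as the paper: for $\theta=1$ both arguments bound the ratio $\Delta H(y_i\vert S\cup\{y_j\})/\Delta H(y_i\vert S)$ of integrals by the supremum of the pointwise integrand ratio $1-p(x,y_j)$ over the support of the denominator's integrand, and then pass to the $S$-free set $\{x\in\Phi: p(x,y_i)\neq 0\}$ to obtain exactly $\bar{\alpha}_e$. The only difference is minor: you dispatch $\theta<1$ directly via submodularity ($\alpha_e\leq 1$), whereas the paper reaches the same trivial bound by splitting $\Delta H$ into its $\Delta P_J$ and $\Delta P_M$ integral components and showing the $\Delta P_M$ fraction necessarily attains the value $1$; both justify the indicator $\mathbf{1}_{\{\theta=1\}}$ in \eqref{Eq:Pr:ElementalCurvatureBound}.
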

\begin{proof}
Note that, for any $A,B$ with $A\cap B = \emptyset$, we can write 
$\Delta H(A \vert B) = \int_\Phi R(x) \theta \Delta P_J(x, A \vert B) dx + \int_\Phi R(x) (1-\theta) \Delta P_M(x,A \vert B) dx,$
where each term in the right-hand side (RHS) is strictly positive. Using this fact, $\alpha_e$ in \eqref{Eq:ElementalCurvatureCoefficientTheory} can be upper-bounded by:
\begin{align}
\alpha_e \leq \max_{y_i,y_j,S}\max &\left\{ \frac{\int_\Phi R(x)\theta \Delta P_J(x,y_i\vert S\cup \{y_j\})dx}
{\int_\Phi R(x)\theta \Delta P_J(x,y_i\vert S)dx},\nonumber \right.\\
\label{Eq:Pr:ElementalCurvatureBoundStep1}
&\left. 
\frac{\int_\Phi R(x)(1-\theta) \Delta P_M(x,y_i\vert S\cup \{y_j\})dx}
{\int_\Phi R(x)(1-\theta) \Delta P_M(x,y_i\vert S)dx} \right\}.    
\end{align}

Let us now define subsets of $\Phi$ over which the above inner fraction denominator integrands are non-zero: 
$$\Phi_{y_i,S}^J \triangleq \{x:x\in\Phi,\Delta P_J(x, y_i \vert S)\neq 0\},$$  
$$\Phi_{y_i,S}^M \triangleq \{x:x\in\Phi,\Delta P_M(x, y_i \vert S)\neq 0\}.$$ Using the definitions of the marginal detection functions (and our earlier notations $P_S, P_{y_i}$ and $P_{y_j}$):
\begin{equation}
\begin{aligned}
\Delta P_J(x,y_i\vert S\cup \{y_j\}) =&\  
p(x,y_i)(1-p(x,y_j))\prod_{s_i\in S}(1-p(x,s_i)),\\
\Delta P_J(x,y_i\vert S) =&\  
p(x,y_i)\prod_{s_i\in S}(1-p(x,s_i)),\\
\Delta P_M(x,y_i\vert S\cup \{y_j\}) =&\ \max\{0,P_{y_i}-\max\{P_S,P_{y_j}\}\},\\
\Delta P_M(x,y_i\vert S) =&\ \max\{0,P_{y_i}-P_S\}\},\\
\end{aligned}
\end{equation}
it is easy to see that, for any $x\not\in\Phi_{y_i,S}^J$, $\Delta P_J(x,y_i\vert S\cup \{y_j\}) = 0$, and for any $x\not\in\Phi_{y_i,S}^M$, $\Delta P_M(x,y_i\vert S\cup \{y_j\}) = 0$ (the latter is due to $P_{y_i}<P_S \implies P_{y_i} < P_S \leq \max\{P_S,P_{y_j}\}$). This fact enables restricting the integrals in the two fractions in \eqref{Eq:Pr:ElementalCurvatureBoundStep1} respectively to the sets $\Phi_{y_i,S}^J$ and $\Phi_{y_i,S}^M$, leading to:  
\begin{align}
\alpha_e \leq \max_{y_i,y_j,S}\max \left\{ \frac{\int_{\Phi_{y_i,S}^J} R(x)\theta \Delta P_J(x,y_i\vert S\cup \{y_j\})dx}
{\int_{\Phi_{y_i,S}^J} R(x)\theta \Delta P_J(x,y_i\vert S)dx},\nonumber \right.\\\left. 
\frac{\int_{\Phi_{y_i,S}^M} R(x)(1-\theta) \Delta P_M(x,y_i\vert S\cup \{y_j\})dx}
{\int_{\Phi_{y_i,S}^M} R(x)(1-\theta) \Delta P_M(x,y_i\vert S)dx} \right\}\nonumber\\
\leq 
\max_{y_i,y_j,S}\max \left\{\max_{x\in \Phi_{y_i,S}^J} \frac{\Delta P_J(x,y_i\vert S\cup \{y_j\})}
{\Delta P_J(x,y_i\vert S)},\nonumber \right.\nonumber\\
\left. 
\max_{x\in \Phi_{y_i,S}^M} \frac{\Delta P_M(x,y_i\vert S\cup \{y_j\})}
{\Delta P_M(x,y_i\vert S)} \right\}\nonumber\\
\leq 
\max_{y_i,y_j,S}\max \left\{\max_{x\in \Phi_{y_i,S}^J} (1-p(x,y_j)),\nonumber \right.\\
\label{Eq:Pr:ElementalCurvatureBoundStep2}
\left. 
\max_{x\in \Phi_{y_i,S}^M}\frac{\max\{0,P_{y_i}-\max\{P_S,P_{y_j}\}\}}
{P_{y_i}-P_S} \right\}
\end{align}

Let us now consider the inner second fraction term. Note that, for any $x\in \Phi_{y_i,S}^M$, $P_{y_i}>P_S$, which implies three possibilities for $P_{y_j}$: 
(1) $P_{y_i}>P_S \geq P_{y_j}$;
(2) $P_{y_i} > P_{y_j} >P_S$; or 
(3) $P_{y_j} \geq P_{y_i} >P_S$. Based on this, we  define three mutually exclusive sub-regions of $\Phi_{y_i,S}^M = \Phi_{y_i,S}^{M,1} \cup \Phi_{y_i,S}^{M,2} \cup \Phi_{y_i,S}^{M,3}$ as 
\begin{equation*}
\begin{aligned}
\Phi_{y_i,S}^{M,1} \triangleq \{x:p(x,y_i) > \max_{s_i\in S}p(x,s_i) \geq p(x,y_j)\},\\
\Phi_{y_i,S}^{M,2} \triangleq \{x:p(x,y_i) > p(x,y_j) > \max_{s_i\in S}p(x,s_i)\},\\
\Phi_{y_i,S}^{M,3} \triangleq \{x:p(x,y_j) \geq p(x,y_i) > \max_{s_i\in S}p(x,s_i)\}.
\end{aligned}    
\end{equation*}
Using these sub-regions, the inner second fraction term in \eqref{Eq:Pr:ElementalCurvatureBoundStep2} can be simplified as  
\begin{equation*}
\frac{\max\{0,P_{y_i}-\max\{P_S,P_{y_j}\}\}}
{P_{y_i}-P_S} = 
\begin{cases}
    \frac{P_{y_i}-P_S}{P_{y_i}-P_S}=1,\quad &x \in \Phi_{y_i,S}^{M,1}\\
    \frac{P_{y_i}-P_{y_j}}{P_{y_i}-P_S}<1,\quad &x \in \Phi_{y_i,S}^{M,2}\\
    0.\quad  &x \in \Phi_{y_i,S}^{M,3}
\end{cases}
\end{equation*}
Therefore, as one can always select $y_i, y_j$ and $S$ such that $\Phi_{y_i,S}^{M,1} \neq \emptyset$, the inner second maximization in \eqref{Eq:Pr:ElementalCurvatureBoundStep2} becomes $1$. Consequently, all outer maximizations in \eqref{Eq:Pr:ElementalCurvatureBoundStep2} becomes $1$. Thus, whenever we use $\theta \neq 1$ in \eqref{Eq:SetDetectionFunction}, the above approach to bounding $\alpha_e$ leads to the trivial bound $\alpha_e \leq 1$. 

Note, however, that, if $\theta = 1$, we can omit the inner second fraction term in \eqref{Eq:Pr:ElementalCurvatureBoundStep2} entirely. This leads to 
\begin{equation*}
\alpha_e \leq 
\max_{y_i,y_j,S} \max_{x\in \Phi_{y_i,S}^J} (1-p(x,y_j)) 
= 1-\min_{y_i,y_j,S} \min_{x\in \Phi_{y_i,S}^J} p(x,y_j).
\end{equation*}
Finally, recall that $x\in\Phi_{y_i,S}^J$ if and only if $p(x,y_i) \neq 0$ and $p(x,s_i) \neq 1$ for some $s_i\in S$. Note that, based on \eqref{Eq:SensingFunction}, for any $s_i\in S$, $p(x,s_i) \neq 1 \iff x\neq s_i$. Therefore, the condition $p(x,s_i) \neq 1$ for some $s_i\in S$ holds for any choice of $S$. Thus, $x\in\Phi_{y_i,S}^J$ if and only if $p(x,y_i) \neq 0$, leading to    
\begin{align*}
\alpha_e \leq&\  
1-\min_{y_i,y_j} \min_{x: p(x,y_i) \neq 0} p(x,y_j)\\
=&\ 1-\min_{\substack{(y_i,y_j,x):y_i\in X, y_j \in X\backslash\{y_i\},\\ 
x\in \Phi, p(x,y_i) \neq 0}} p(x,y_j).
\end{align*}
This completes the proof.
\end{proof}

\begin{remark}
Evaluating the elemental curvature upper-bound $\bar{\alpha}_e$ proposed in \eqref{Eq:Pr:ElementalCurvatureBound} is significantly more computationally efficient compared to evaluating the original elemental curvature metric $\alpha_e$ as defined in \eqref{Eq:ElementalCurvatureCoefficientTheory}. 
A valid performance bound for the greedy solution then can be obtained by using the computed $\bar{\alpha}_e$ value to substitute for $\alpha_e$ in \eqref{Eq:ElementalCurvatureBoundTheory}. 
\end{remark}

\begin{remark}\label{Rm:ElementalCurvatureBoundTriviality}
The proposed elemental curvature upper-bound $\bar{\alpha}_e$ proposed in \eqref{Eq:Pr:ElementalCurvatureBound} becomes trivial (i.e., $\bar{\alpha}_e=1$) under two scenarios. The first scenario is when we can place two agents in the mission space ground set (i.e., find $y_i,y_j\in X$) such that there is no overlapping in their sensing regions (i.e., when there exists $x\in \Omega$ such that $p(x,y_i)\neq 0$ but $p(x,y_j)=0$). The second scenario is when the max detection function is used in the coverage objective (i.e., when $\theta \neq 1$ in \eqref{Eq:SetDetectionFunction}). Note, however, that the lack of a computationally efficient non-trivial $\bar{\alpha}_e \neq 1$ does not guarantee $\alpha_e = 1$ - as it is just a result of our particular approach used to establish an upper-bound for the elemental curvature $\alpha_e$. Future research is directed towards addressing these challenges.
\end{remark}



\subsection{Partial Curvature \cite{Liu2018}}
\label{SubSec:PartialCurvature}

The \emph{partial curvature} of \eqref{Eq:SetCoverageProblem} is given by
\begin{equation}\label{Eq:PartialCurvatureCoefficientTheory}
    \alpha_p = 
    \max_{\substack{(S,y): y \in S \in \mathcal{I}^N}}\left[1-\frac{\Delta H(y \vert S \backslash \{y\}) }{\Delta H(y \vert \emptyset)}\right].
\end{equation}
The corresponding performance bound $\beta_p$ is given by 
\begin{equation}\label{Eq:PartialCurvatureBoundTheory}
    \beta_p \triangleq \frac{1}{\alpha_p}\left[1-\left(1-\frac{\alpha_p}{N}\right)^N\right] \leq \frac{H(S^G)}{H(S^*)}.
\end{equation}

This partial curvature measure $\alpha_p$ \eqref{Eq:PartialCurvatureCoefficientTheory} provides an alternative to the total curvature measure $\alpha_t$ \eqref{Eq:TotalCurvatureCoefficientTheory}. In particular, it addresses the potentially ill-defined nature of the $H(X)$ term involved in $\alpha_t$ \eqref{Eq:TotalCurvatureCoefficientTheory}. Consequently, $\alpha_p$ can be evaluated when the domain of $H$ constrained, i.e., when $H:\mathcal{I}\rightarrow \R_{\geq0}$ with some $\mathcal{I} \subset 2^X$.

The above $\beta_p$ \eqref{Eq:PartialCurvatureBoundTheory} is only valid under a few additional conditions on $f$, $X$ and $\mathcal{I}^N$ (which are omitted here, but can be found in \cite{Liu2018}). Note that we can directly compare $\alpha_t$ and $\alpha_p$ to conclude regarding the nature of the corresponding performance bounds $\beta_t$ and $\beta_p$, as $\beta_t$ \eqref{Eq:TotalCurvatureBoundTheory} and $\beta_p$ \eqref{Eq:PartialCurvatureBoundTheory} has identical forms. The work in \cite{Liu2018} has established that $\alpha_p \leq \alpha_t$, which implies that $\beta_p \geq \beta_t$, i.e., $\beta_p$ is always tighter than $\beta_t$. 

Note also that, similar to $\beta_t$, $\beta_p$ will provide significantly improved performance bounds (i.e., $\beta_p \simeq 1$) when $H$ is weakly submodular. As observed before, such a scenario occurs when the ground set is sparse and/or agent sensing capabilities are weak. On the other hand, again, similar to $\beta_t$, $\beta_p$ will provide poor performance bounds (i.e., $\beta_p \simeq \beta_f$) when $H$ is strongly submodular. This happens when the ground set is dense, and agent sensing capabilities are strong.

Unfortunately, similar to the elemental curvature $\alpha_e$ \eqref{Eq:ElementalCurvatureCoefficientTheory}, evaluating the partial curvature $\alpha_p$ \eqref{Eq:PartialCurvatureCoefficientTheory} involves solving a set function maximization problem (notice the set variable $Y$ in \eqref{Eq:PartialCurvatureCoefficientTheory}). Therefore, evaluating $\alpha_p$ is much more computationally expensive compared to evaluating $\alpha_t$ or $\alpha_g$. However, like in the case of  $\alpha_e$, we can exploit some special structural properties of the optimal coverage problem to overcome this challenge. 

\begin{proposition}\label{Pr:PartialCurvatureBound}
An upper-bound for the partial curvature $\alpha_p$ in \eqref{Eq:PartialCurvatureCoefficientTheory} is given by  
\begin{equation}\label{Eq:PartialCurvatureBound}
    \alpha_p \leq \bar{\alpha}_p = \frac{1}{\beta_f} \max_{y\in X} \left(1-\frac{\Delta H(y \vert A^G_y)}{H(\{y\})}\right)
\end{equation}
where $A^G_y$ is the greedy solution for the polymatroid maximization (under a uniform matroid constraint) problem
\begin{equation}\label{Eq:PartialCurvatureBound2}
    A_y^* = \underset{\substack{A\subseteq X\backslash \{y\}, \\ \vert A \vert = N-1}}{\arg\max} \left(-\Delta H(y \vert A)+H(\{y\})\right).
\end{equation}
\end{proposition}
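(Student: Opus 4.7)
The plan is to reduce the inner maximization in the definition \eqref{Eq:PartialCurvatureCoefficientTheory} of $\alpha_p$ to a polymatroid maximization under a uniform matroid constraint of rank $N-1$, then invoke the fundamental performance bound \eqref{Eq:FundamentalPerformanceBound} via Theorem \ref{Th:SubmodularityOfMarginalCoverage}.

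First, I would rewrite the partial curvature by substituting $A = S\backslash\{y\}$ and using $\Delta H(y\vert\emptyset)=H(\{y\})$ (which follows from $H(\emptyset)=0$), obtaining
$$\alpha_p = \max_{y\in X}\frac{1}{H(\{y\})}\max_{\substack{A\subseteq X\backslash\{y\} \\ |A|\leq N-1}}\left[H(\{y\})-\Delta H(y\vert A)\right].$$
Next, I would identify the inner objective as $G_{\{y\}}(A) \triangleq -\Delta H(y\vert A)+H(\{y\})$, which is precisely the affine negated marginal coverage function of Theorem \ref{Th:SubmodularityOfMarginalCoverage} with $B=\{y\}$. By that theorem, $G_{\{y\}}$ is a polymatroid set function on $2^{X\backslash\{y\}}$; in particular its monotonicity allows replacing the inner constraint $|A|\leq N-1$ by $|A|=N-1$ without loss, so the inner maximization matches the problem \eqref{Eq:PartialCurvatureBound2} whose optimum is $A_y^*$ and whose greedy solution is $A_y^G$.

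Then I would invoke the Nemhauser--Wolsey--Fisher performance bound applied to \eqref{Eq:PartialCurvatureBound2} to obtain $G_{\{y\}}(A_y^G)\geq \beta_f\, G_{\{y\}}(A_y^*)$, i.e.\ $G_{\{y\}}(A_y^*)\leq \frac{1}{\beta_f}G_{\{y\}}(A_y^G)$. Substituting back and dividing through by $H(\{y\})$ gives
$$\alpha_p \leq \frac{1}{\beta_f}\max_{y\in X}\frac{G_{\{y\}}(A_y^G)}{H(\{y\})} = \frac{1}{\beta_f}\max_{y\in X}\left(1-\frac{\Delta H(y\vert A_y^G)}{H(\{y\})}\right),$$
which is the claimed inequality.

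The main conceptual step is recognizing that Theorem \ref{Th:SubmodularityOfMarginalCoverage} was proved precisely to expose the polymatroid structure of the inner problem in the second argument; without that observation one cannot apply Nemhauser--Wolsey--Fisher inside \eqref{Eq:PartialCurvatureCoefficientTheory}. A mild subtlety worth flagging is that the inner matroid has rank $N-1$, for which the exact fundamental bound is $1-(1-1/(N-1))^{N-1}$; since the sequence $1-(1-1/n)^n$ is decreasing in $n$, this exceeds $\beta_f=1-(1-1/N)^N$, so replacing it by the constant $\beta_f$ used in the statement only loosens (and hence preserves) the inequality.
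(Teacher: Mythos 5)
Your proof is correct and follows essentially the same route as the paper's: the change of variables $A=S\backslash\{y\}$, the identification of the inner objective with the polymatroid function $G_{\{y\}}(\cdot)$ from Theorem~\ref{Th:SubmodularityOfMarginalCoverage}, and the application of the fundamental greedy bound to the inner rank-$(N-1)$ uniform-matroid problem. Your two added clarifications---that monotonicity of $G_{\{y\}}$ justifies tightening $\vert A\vert\leq N-1$ to $\vert A\vert = N-1$, and that using $\beta_f$ (computed with $N$) in place of the sharper $1-(1-1/(N-1))^{N-1}$ only loosens the bound---are correct and make explicit two points the paper passes over quickly.
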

\begin{proof}
Using the change of variables $A \triangleq S\backslash \{y\}$ (instead of $S$) in \eqref{Eq:PartialCurvatureCoefficientTheory}, we get 
\begin{align}
\alpha_p =&\  
\max_{\substack{(A,y): y\in X, \\ A \subseteq X \backslash \{y\}, \vert A \vert = N-1}}\left[1-\frac{\Delta H(y \vert A) }{\Delta H(y \vert \emptyset)}\right]\\
=&\ 
\max_{y\in X} \left(\frac{1}{H(\{y\})}\max_{\substack{A \subseteq X \backslash \{y\}\\\vert A \vert = N-1}}\left[H(\{y\})-\Delta H(y \vert A)\right] \right).   
\end{align}
According to Th. \ref{Th:SubmodularityOfMarginalCoverage}, for any $y\in X$, the above inner maximization problem is a polymatroid maximization problem. Therefore, the fundamental performance bound $\beta_f$ in \eqref{Eq:FundamentalPerformanceBound} (now with $N\rightarrow N-1$) is applicable to relate the performance of its optimal solution $A_y^*$ \eqref{Eq:PartialCurvatureBound2} and its greedy solution $A_y^G$. In particular, using the technical result \cite[Lm. 2(b)]{WelikalaJ02021}, this inner maximization can be replaced by an upper-bound to obtain: 
\begin{equation*}
\alpha_p \leq  
\max_{y\in X} \left(\frac{1}{H(\{y\})}\frac{1}{\beta_f}\left[H(\{y\})-\Delta H(y \vert A_y^G)\right] \right),   
\end{equation*}
which leads to the expression in \eqref{Eq:PartialCurvatureBound}.
\end{proof}

Finally, we point out that the upper bound for $\alpha_p$ established in the above proposition can be computed efficiently, and when used in \eqref{Eq:PartialCurvatureBoundTheory}, provides a lower bound to actual partial curvature-based performance bound $\beta_p$.

\begin{remark}
Evaluating the partial curvature upper-bound $\bar{\alpha}_p$ proposed in \eqref{Eq:PartialCurvatureBound} is significantly more computationally efficient compared to evaluating the original partial curvature metric $\alpha_p$ as defined in \eqref{Eq:PartialCurvatureCoefficientTheory}. A valid performance bound for the greedy solution then can be obtained by using the computed $\bar{\alpha}_p$ value to substitute for $\alpha_p$ in \eqref{Eq:PartialCurvatureBoundTheory}. 
\end{remark}

\subsection{Extended Greedy Curvature \cite{WelikalaJ02021}}\label{SubSec:ExtendedGreedyCurvature}

The \emph{extended greedy curvature}, as the name suggests, requires executing some extra greedy iterations in the greedy algorithm (i.e., Alg. \ref{Alg:GreedyAlgorithm}). This is not an issue as Alg. \ref{Alg:GreedyAlgorithm} can be executed beyond $N$ iterations until $M \triangleq \vert X \vert$ iterations - analogous to a scenario where more than $N$ agents are to be deployed to the mission space in a greedy fashion. 

To define the extended greedy curvature, we first need some additional notations. Recall that we used $(S^i,s^i)$ to denote the greedy (set\,,\,element) observed at the $i$\tsup{th} greedy iteration, where $i\in \N_M^0$. Let $m \triangleq \left \lfloor{\frac{M}{N}}\right \rfloor$, and for any $n\in \N_{m-1}^0$, 
\begin{equation}\label{Eq:GreedySolutionBlock}
    S^G_n \triangleq S^{(n+1)N} \backslash S^{nN} = \{s^{nN+1},s^{nN+2},\ldots,s^{nN+N}\}, 
\end{equation}
\begin{equation}\label{Eq:groundSetBlock}
X_n \triangleq X\backslash S^{nN} \ \ \mbox{ and } \ \  
\mathcal{I}_n^N \triangleq \{S:S \subseteq X_n, \vert S \vert \leq N\}.   
\end{equation}
Simply, $S^G_n$ is the $(n+1)$\tsup{th} block of size $N$ greedy agent placements, and $X_n$ is the the set of agent locations remaining after $nN$ greedy iterations. Note that, $S^G_0 = S^G, X_0 = X$ and $\mathcal{I}_0^N = \mathcal{I}^N$. Note also that, for any $n\in\N_{m-1}^0$, the set system $(X_n,\mathcal{I}_n^N)$ is a uniform matroid of rank $N$, and $S^G_n$ is the greedy solution for ${\arg\max}_{S\in\mathcal{I}_n^N} H(S)$.

The extended greedy curvature of \eqref{Eq:SetCoverageProblem} is given by
\begin{equation}\label{Eq:ExtGreedyCurvatureMeasure}
    \alpha_u \triangleq \min_{i \in Q}\ \alpha^i_u,
\end{equation}
where $Q \subseteq \bar{Q} \triangleq \{i\in\N_M: i = nN+1,\ n\in \N_{m-1}^0 \mbox{ or } i = nN,\ n\in \N_m  \mbox{ or } i = M \}$ and  
\begin{equation}\label{Eq:UpperBoundAlphaiForfYStar}
    \alpha^i_u \triangleq 
    \begin{cases}
     H(S^{i-1}) + \underset{S\in\mathcal{I}^N_{(i-1)/N}}{\max} \left[ \sum_{y\in S} \Delta H(y \vert S^{i-1})\right]\\
     \hfill \mbox{if }i=nN+1,\ n\in\N_{m-1}^0,\\
     H(S^{i-N}) + \frac{1}{\beta_f}\left[ H(S^i) - H(S^{i-N}) \right]\\
     \hfill \mbox{if }i=nN,\ n\in\N_m,\\
     H(S^i) 
     \hfill \mbox{if }i=M. 
    \end{cases}
\end{equation}


The performance bound $\beta_u$ corresponding to the extended greedy curvature measure $\alpha_u$ is given by
\begin{equation}\label{Eq:ExtGreedyCurvatureBoundTheory}
\beta_u \triangleq \frac{H(S^G)}{\alpha_u} \leq \frac{H(S^G)}{H(S^*)}. 
\end{equation}

Note that $\bar{Q}$ is a fixed set of greedy iteration indexes. For each $i\in \bar{Q}$, a corresponding $\alpha^i_u$ value can be computed using known byproducts generated during the execution of greedy iterations. Unlike $\bar{Q}$, $Q$ is an arbitrary subset selected from $\bar{Q}$ based on the user preference. For example, one may choose $Q=\{1,N,N+1,2N,2N+1\}$ so that $\alpha_u$ value can be obtained upon executing only $N+1$ extra greedy iterations. Another motivation for selecting a smaller set for $Q$ compared to $\bar{Q}$ may also be the computational cost associated with running extra greedy iterations. 

However, according to \eqref{Eq:ExtGreedyCurvatureBoundTheory}, $\beta_u$ is a monotonically decreasing function in $\alpha_u$, and according to \eqref{Eq:ExtGreedyCurvatureMeasure}, $\alpha_u$ is a monotonically decreasing set function in $Q$. Consequently, the performance bound $\beta_u$ is a \emph{monotone set function} in $Q$, implying that any superset of $Q$ will always provide a better (or at least the same) $\beta_u$ value compared to that obtained from the set $Q$.

In the context of optimal coverage problem \eqref{Eq:SetCoverageProblem}, to identify unique qualities of this extended greedy curvature-based performance bound, let us first consider $\alpha_u^1$ (from \eqref{Eq:UpperBoundAlphaiForfYStar}):
\begin{equation*}
\alpha_u^1 = 
H(S^{0}) + \underset{S\in\mathcal{I}^N_{0}}{\max} \left[ \sum_{y\in S} \Delta H(y \vert S^{0})\right] = 
\underset{S\in\mathcal{I}^N}{\max} \left[ \sum_{y\in S} H(y)\right]     
\end{equation*}
Notice that when the set coverage function $H$ is closer to being modular (i.e., weakly submodular), the above $\alpha_u^1 \simeq H(S^*)$. Through \eqref{Eq:ExtGreedyCurvatureMeasure} and \eqref{Eq:ExtGreedyCurvatureBoundTheory}, this implies that $\frac{H(S^G)}{\beta_u} = \alpha_u  \leq \alpha_u^1 \simeq H(S^*)$, leading to the conclusion $\beta_u \simeq 1$. Therefore, when $H$ is weakly submodular, i.e., when the ground set is sparse and/or agent sensing capabilities are weak, $\beta_u$ provides significantly improved performance bounds. This bahaviour is similar to that of the performance bounds $\beta_t,\beta_g$ and $\beta_p$ discussed before.

Let us now consider $\alpha_u^{2N}$ (from \eqref{Eq:UpperBoundAlphaiForfYStar}, also using $S^N=S^G$)
\begin{equation}
\alpha_u^{2N} = H(S^G) + \frac{1}{\beta_f}\left[H(S^{2N})-H(S^{N})\right].
\end{equation}
Notice that when the set coverage function $H$ is strongly submodular (i.e., when its ``diminishing returns'' property is severe, see Def. \ref{Def:Submodularity}(3)), the greedy coverage level $H(S^i)$ should saturate quickly with the greedy iterations $i$. Consequently, above $\alpha_u^{2N}\simeq H(S^G)$ as $\frac{1}{\beta_f}(H(S^{2N})-H(S^{N}))\simeq 0$. Through \eqref{Eq:ExtGreedyCurvatureMeasure} and \eqref{Eq:ExtGreedyCurvatureBoundTheory}, this implies that $\frac{H(S^G)}{\beta_u} = \alpha_u  \leq \alpha_u^1 \simeq H(S^G)$ leading to the conclusion $\beta_u \simeq 1$. Therefore, when $H$ is strongly submodular, i.e., when the ground set is dense, and agent sensing capabilities are strong, $\beta_u$ provides significantly improved performance bounds. This bahavior is similar to that of the performance bound $\beta_e$ discussed before.

Moreover, in this strong submodular case, as $\beta_f \ll \beta_e \simeq 1$, it is worth noting that the above factor $\frac{1}{\beta_f}$ (originally appearing in \eqref{Eq:UpperBoundAlphaiForfYStar}) can be replaced with the much smaller factor $\frac{1}{\beta_e}$ (compared to $\frac{1}{\beta_f}$). Therefore, this modification leads to further improvements in the performance bound $\beta_u \simeq 1$. 

In all, the extended greedy curvature measure-based performance bound $\beta_u$ is computationally efficient and provides significantly improved performance bounds under both strong and weak submodular scenarios. This behavior contrasts with that of all other performance bounds discussed before.

\section{Discussion}\label{Sec:Discussion}


In this section, we summarize our findings on different curvature measures in the context of multi-agent optimal coverage problems. We start by analyzing the computational complexity of the greedy solution and considered curvature measures for the coverage problems.

\subsection{Computational Complexity}
Let us assume the sensing function $p(x,s_i)$ in \eqref{Eq:SensingFunction} for any $x\in \Phi, s_i \in X$ is of complexity $O(1)$. Consequently, the set detection function $P(x,S)$ in \eqref{Eq:SetDetectionFunction} for any $x \in \Phi, S \subseteq X$ is of complexity $O(\vert S \vert)$. Now, let us assume the coverage integral in \eqref{Eq:SetCoverageFunction} is computed by discretizing the feasible space $\Phi$ into $\bar{M}$ points. Therefore, the set coverage function $H(S)$ in \eqref{Eq:SetCoverageFunction} for any $S\subseteq X$ is of complexity $O(\vert S \vert \bar{M})$.

Now, looking at the $i$\tsup{th} greedy iteration (see Alg. \ref{Alg:GreedyAlgorithm}), it is easy to see that the marginal gain $\Delta H (y\vert S^{i-1})$ is of complexity $O(i\bar{M})$ and $M-i+1$ such marginal gain evaluations are required. Therefore, complexity of the $i$\tsup{th} greedy iteration is $O((M-i+1)i\bar{M})$. As the iteration index $i \in \N_N$ with $N<M$, the overall complexity of the greedy algorithm can be shown to be $O(N^2 M \bar{M})$. It is worth noting here that the complexity of a brute-force evaluation of \eqref{Eq:SetCoverageProblem} is $O(N{M \choose N}\bar{M}) \sim O(M^N \bar{M})$. 

The total curvature $\alpha_t$ in \eqref{Eq:TotalCurvatureCoefficientTheory} requires additional evaluations $H(X)$ and $H(X\backslash \{y\})$ for all $y\in X$. Using this, it can shown that $\alpha_t$ is of complexity $O(M^2\bar{M})$. The greedy curvature  $\alpha_g$ in \eqref{Eq:GreedyCurvatureCoefficientTheory} requires no additional evaluations except for $N$ ratios; thus, it is of complexity $O(N)$. 

The elemental curvature $\alpha_e$ in \eqref{Eq:ElementalCurvatureCoefficientTheory} involves a set function maximization. Omitting lower order $H$ terms, the complexity of $\beta_e$ can be seen as that of evaluating $H(S\cup\{y_i,y_j\})$ for all possible set variables $\{y_i,y_j\}\subset X$ and $S\subset X\backslash \{y_i,y_j\}$. Using the relationships $\sum_{r=0}^M {M \choose r} = 2^M$ and $\sum_{r=0}^M r {M \choose r} = M2^{M-1}$, it can be shown that $\alpha_e$ is of complexity $O(M^3 2^M \bar{M})$. Note, however, that the conservative upper-bound $\bar{\alpha}_e$ proposed for $\alpha_e$ (see \eqref{Eq:Pr:ElementalCurvatureBound}) is of complexity $O(M^2\bar{M})$ as it involves searching over $\{y_i,y_j\}\subset X$ and $x\in\Phi$ (where $X$ and $\Phi$ have been discretized into $M$ and $\bar{M}$ points, respectively).

The partial curvature $\alpha_p$ in \eqref{Eq:PartialCurvatureCoefficientTheory} also involves a set function maximization. Using similar arguments as before, the complexity of $\beta_p$ can be seen as that of evaluating $H(A\cup \{y_i\})$ for all possible $y_i\in X$ and $A \subseteq X\backslash \{y_i\}$ with $\vert A \vert = N-1$. It can be shown that $\alpha_p$ is of the complexity $O(M^2 2^M \bar{M})$ if the constraint  $\vert A \vert = N-1$ is omitted, otherwise $O(M \sum_{r=0}^{N-1} r{M-1 \choose r}  \bar{M}) \sim O(M^N \bar{M})$. Note also that the conservative upper-bound for $\alpha_p$ given in \eqref{Eq:PartialCurvatureBound} is of complexity $O(N^2 M^2 \bar{M})$ (as it requires $M$ separate greedy solution evaluations where each greedy solution $O(N^2 M \bar{M})$).
      
The extended greedy curvature $\alpha_u$ in \eqref{Eq:ExtGreedyCurvatureMeasure}, if evaluated using $nN+1$ additional greedy iterations, where $n\in \N$, its complexity is $O(n^2N^2M\bar{M})$. Note that when the maximum possible extra greedy iterations are used in evaluating $\alpha_u$, its complexity is $O(M^3\bar{M})$.

\begin{table*}[!ht]
\centering
\caption{Characteristics of Different Curvature Metrics in the Context of Optimal Coverage Problem}
\label{Tab:SummaryTable}
\resizebox{\textwidth}{!}{%
\begin{tabular}{|c|c|cccc|c|c|}
\hline
\multirow{5}{*}{\begin{tabular}[c]{@{}c@{}} Curvature \\ Measure \\ $\alpha$ \end{tabular}} &
\multirow{5}{*}{\begin{tabular}[c]{@{}c@{}} Performance \\ Bound \\ $\beta$ \end{tabular}} &
\multicolumn{4}{c|}{Effectiveness of $\beta$ (i.e., $\beta_f \ll \beta \simeq 1$) when: }&
  \multirow{5}{*}{\begin{tabular}[c]{@{}c@{}} Complexity\end{tabular}} &
  \multirow{5}{*}{\begin{tabular}[c]{@{}c@{}} Additional Remarks \\ (Note: Alg. \ref{Alg:GreedyAlgorithm} $\sim O(N^2 M \bar{M})$)  \end{tabular}} \\ \cline{3-6}
 &  & \multicolumn{2}{c|}{\begin{tabular}[c]{@{}c@{}} Agent Sensing \\ Capabilities \eqref{Eq:SensingFunction} \end{tabular}} & \multicolumn{2}{c|}{Denseness of $X$} &  &  \\ \cline{3-6}
 &  & \multicolumn{1}{c|}{\begin{tabular}[c]{@{}c@{}} Low \\ ($\delta \downarrow,\ \lambda \uparrow$) \end{tabular}} & \multicolumn{1}{c|}{\begin{tabular}[c]{@{}c@{}} High \\ ($\delta \uparrow,\ \lambda \downarrow$) \end{tabular}} & \multicolumn{1}{c|}{\begin{tabular}[c]{@{}c@{}} Low \\ ($M \downarrow$) \end{tabular}}    & \multicolumn{1}{c|}{\begin{tabular}[c]{@{}c@{}} High \\ ($M\uparrow$) \end{tabular}}    &  &  \\ \hline
 ``Total'' $\alpha_t$ & $\beta_t$ & \multicolumn{1}{c|}{\cmark} & \multicolumn{1}{c|}{\xmark} & \multicolumn{1}{c|}{\cmark}    &   \xmark  & $O(M^2 \bar{M})$ &  \\ \hline
 ``Greedy'' $\alpha_g$ & $\beta_g$ & \multicolumn{1}{c|}{\cmark} & \multicolumn{1}{c|}{\xmark} & \multicolumn{1}{c|}{\cmark}    &  \xmark    &  $O(N)$ & \begin{tabular}[c]{@{}c@{}} $\beta_g$ is a byproduct of Alg. \ref{Alg:GreedyAlgorithm} \\ When $\beta_f \ll \beta_t \simeq 1$, $\beta_t \leq \beta_g \simeq 1$ \end{tabular} \\ \hline
 ``Elemental'' $\alpha_e$ & $\beta_e$ & \multicolumn{1}{c|}{\xmark} & \multicolumn{1}{c|}{\cmark} & \multicolumn{1}{c|}{\xmark}    &  \cmark    & $O(M^3 2^M \bar{M})$ & Conservative estimate \eqref{Eq:Pr:ElementalCurvatureBound}: $O(M^2\bar{M})$ \\ \hline
 ``Partial'' $\alpha_p$ & $\beta_p$ & \multicolumn{1}{c|}{\cmark} & \multicolumn{1}{c|}{\xmark} & \multicolumn{1}{c|}{\cmark}    &  \xmark    & $O(M^N \bar{M})$ & 
 \begin{tabular}[c]{@{}c@{}} Without set constraints: $O(M^2 2^M \bar{M})$ \\ Conservative estimate \eqref{Eq:PartialCurvatureBound} $O(N^2 M^2 \bar{M})$ \\ In general: $\beta_f \leq \beta_t \leq \beta_p \leq 1$\end{tabular}  \\ \hline
 ``Ext. Greedy'' $\alpha_u$ & $\beta_e$ & \multicolumn{1}{c|}{\cmark} & \multicolumn{1}{c|}{\cmark} & \multicolumn{1}{c|}{\cmark}    &  \cmark    & $O(n^2N^2M\bar{M})$ &  
 \begin{tabular}[c]{@{}c@{}} i.e., with $nN+1$ extra iterations. \\ With $M-N$ extra iterations: $O(M^3\bar{M})$ \end{tabular} \\ \hline
\end{tabular}%
}
\end{table*}

\subsection{Summary}
Our findings on the effectiveness and computational complexity of different curvature-based performance bounds have been summarized in Tab. \ref{Tab:SummaryTable}. 

In terms of effectiveness, we have observed that total, greedy, and partial curvature measures provide significantly improved performance bounds when agents have low sensing capabilities (i.e., high decay $\lambda$ and/or low range $\delta$) and/or when the ground set is sparse (i.e., low $M$). In particular, compared to the total curvature, (1) greedy curvature performs slightly better in such ``weakly submodular'' scenarios, and (2) partial curvature performs slightly better in general. 
Conversely, the elemental curvature measure provides significantly improved performance bounds when agents have strong sensing capabilities (i.e., low decay $\lambda$ and/or high range $\delta$) and/or when the ground set is dense (i.e., high $M$). Most importantly, the extended greedy curvature distinguishes itself by being able to provide significantly improved performance bounds regardless of the nature of agent sensing capabilities or ground set denseness - proving its versatility in a broad range of scenarios.

In terms of computational complexity, the greedy curvature measure is the most efficient as it can be computed directly using the byproducts of the greedy algorithm (it has a complexity $(O(N))$).
The total curvature exhibits a complexity of \(O(M^2\bar{M})\), manageable but higher than greedy curvature. The conservative upper-bound estimate of the elemental curvature has the same complexity $O(M^2\bar{M})$. In contrast, the original elemental and partial curvatures measures have the highest computational complexities, $O(M^3 2^M \bar{M})$ and $O(M^N\bar{M})$ (i.e., without some constraints, $O(M^3 2^M \bar{M})$), respectively. The conservative upper-bound estimate of the partial curvature has a higher complexity \(O(M^2 2^M \bar{M})\) than that for elemental curvature. The complexity of extended greedy curvature is less considerable compared to that of elemental and partial curvature. However, it is of comparable complexity with respect to that of total curvature and conservative upper bound estimates of elemental and partial curvature measures.

To summarize, this review has highlighted three main challenges in using curvature-based performance bounds for optimal coverage problems: 
(1) the inherent dependence of the effectiveness on the strong or weak nature of the submodularity property of the considered optimal coverage problem,  
(2) the computational complexity associated with computing the curvature measures, and 
(3) the technical conditions required for the successful application of curvature-based performance bounds (e.g., see Remark \ref{Rm:ElementalCurvatureBoundTriviality}). Towards addressing these challenges, the recently proposed extended greedy curvature concept \cite{WelikalaJ02021} has shown promising advances. This curvature measure takes a data-driven approach and utilizes only the information observed during a selected number of extra greedy iterations - offering a computationally efficient performance bound without inherent or technical limitations. 

In light of these findings, we believe future research should be directed toward finding more \emph{data-driven curvature measures} (e.g., $\alpha_u$) to directly address computational challenges faced by standard \emph{theoretical curvature measures} (e.g., $\alpha_e,\alpha_p$). However, in such a pursuit, like in any other data-driven technique development, a crucial challenge would be to establish theoretical guarantees/characterizations on its effectiveness/performance. This challenge motivates exploring \emph{hybrid curvature measures} that have elements rooted in both data-driven curvature measures and theoretical curvature measures. 

In a limited sense, the extended greedy curvature measure $\alpha_u$ can be seen as a hybrid curvature measure as it involves a term $\beta_f$ in \eqref{Eq:UpperBoundAlphaiForfYStar} that can be replaced by $\beta_t,\beta_e$ or $\beta_p$ (which are functions of \emph{theoretical curvature measures} $\alpha_t,\alpha_e$ or $\alpha_p$, respectively). On the other hand, the developed computationally efficient upper bounds on theoretical curvature measures using data-driven techniques (e.g., see $\bar{\alpha}_p$ proposed in Prop. \ref{Pr:PartialCurvatureBound}) can also be seen as a hybrid curvature measure. Nevertheless, the complete theoretical implications of such hybrid curvature measures are yet to be studied, not only in the context of optimal coverage problems but also in the context of broader submodular maximization problems.

\section{Case Studies}
\label{Sec:CaseStudies}

\begin{figure*}[!h]
    \centering
    \begin{subfigure}[t]{0.4\columnwidth}
        \centering
        \includegraphics[width=\columnwidth]{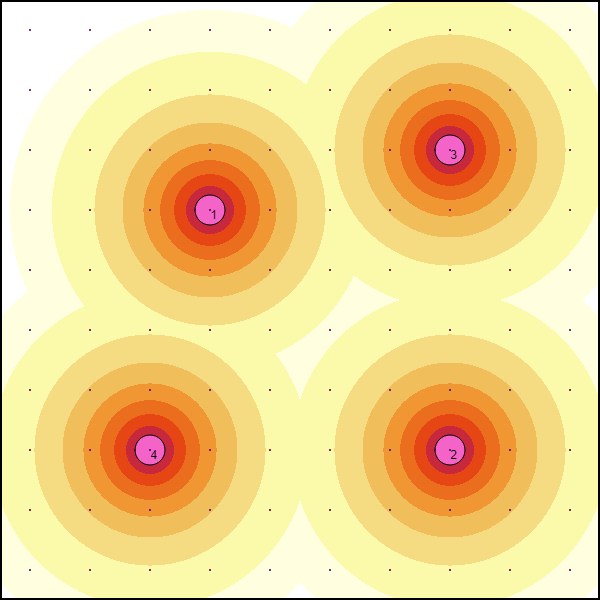}
        \caption{$\theta = 0 \rightarrow \beta_u=0.837$}
    \end{subfigure}%
    \hfill
    \begin{subfigure}[t]{0.4\columnwidth}
        \centering
        \includegraphics[width=\columnwidth]{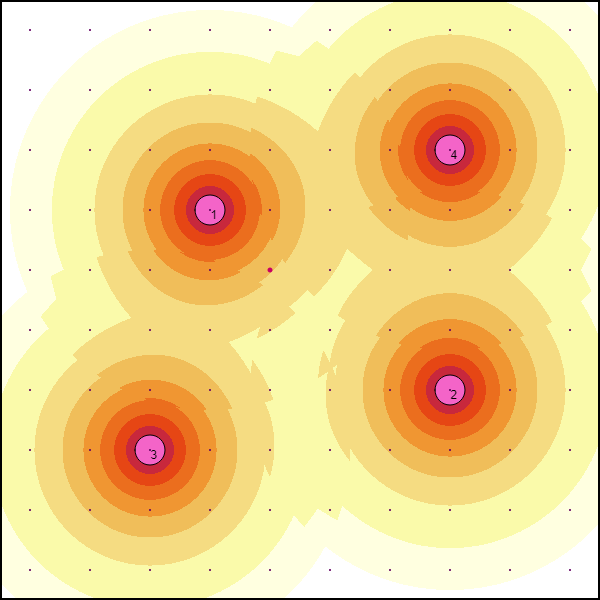}
        \caption{$\theta = 0.3 \rightarrow \beta_u=0.853$}
    \end{subfigure}%
    \hfill
    \begin{subfigure}[t]{0.4\columnwidth}
        \centering
        \includegraphics[width=\columnwidth]{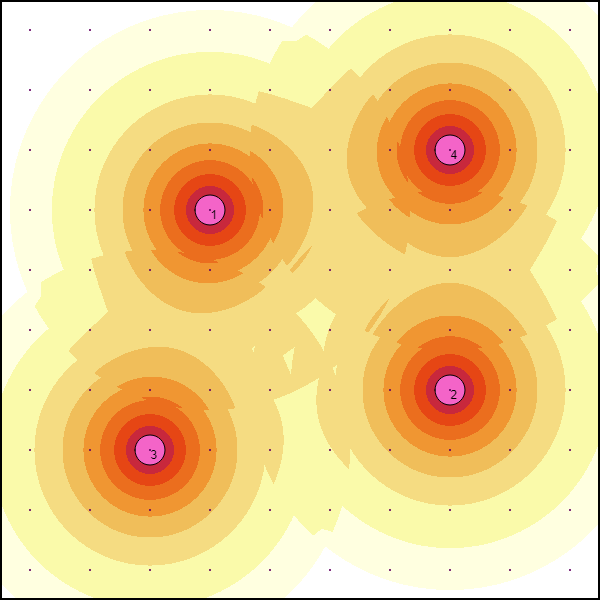}
        \caption{$\theta = 0.5 \rightarrow \beta_u=0.872$}
    \end{subfigure}%
    \hfill
    \begin{subfigure}[t]{0.4\columnwidth}
        \centering
        \includegraphics[width=\columnwidth]{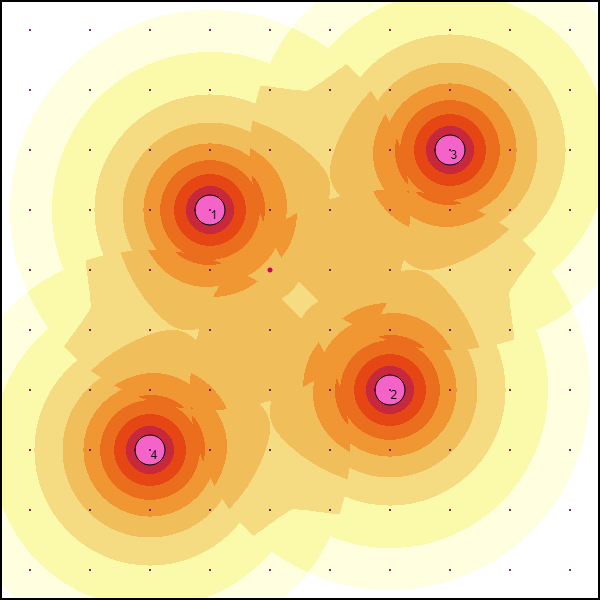}
        \caption{$\theta = 0.7 \rightarrow \beta_u=0.886$}
    \end{subfigure}%
    \hfill
    \begin{subfigure}[t]{0.4\columnwidth}
        \centering
        \includegraphics[width=\columnwidth]{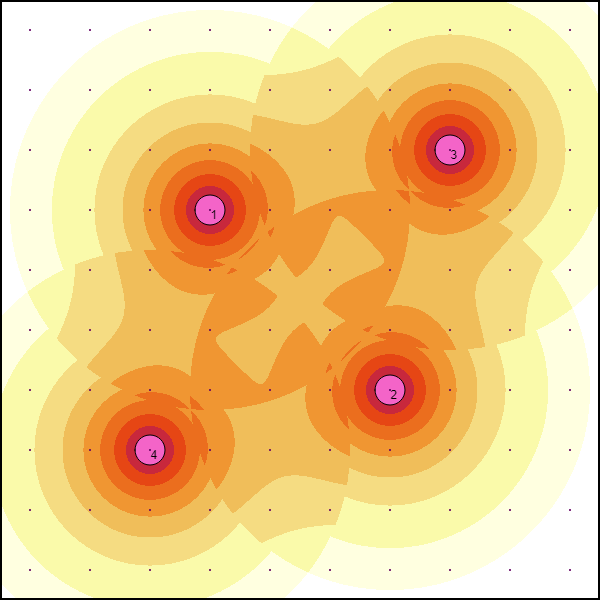}
        \caption{$\theta = 1 \rightarrow \beta_u=0.916$}
    \end{subfigure}%
    \caption{Greedy solutions, coverage level patterns, and the tightest performance bounds observed under different weight parameters $\theta \in [0,1]$ in the Blank mission space with $N=4$ agents with sensing range $\delta=200$ and decay $\lambda = 0.012$.}
    \label{Fig:DetectionFunction}
\end{figure*}

\begin{figure}[!h]
\centering
\begin{minipage}{\columnwidth}
\centering
\captionof{table}{Performance bounds observed under different sensing range $\delta$ values in the Blank mission space with $N=10$ agents with sensing decay $\lambda=0.003$.}
\label{Tab:BlankRange}
\resizebox{0.85\columnwidth}{!}{%
\begin{tabular}{lllllll}
\hline
\multicolumn{7}{|c|}{\textbf{Perf. bounds with respect to $\theta$ at $\delta=35$}} \\ \hline
\multicolumn{1}{|c|}{\textbf{$\theta$}} & \multicolumn{1}{c|}{\textbf{$\beta_f$}} & \multicolumn{1}{c|}{\textbf{$\beta_t$}} & \multicolumn{1}{c|}{\textbf{$\beta_g$}} & \multicolumn{1}{c|}{\textbf{$\beta_e$}} & \multicolumn{1}{c|}{\textbf{$\beta_p$}} & \multicolumn{1}{c|}{\textbf{$\beta_u$}} \\ \hline
\multicolumn{1}{|r|}{0} & \multicolumn{1}{l|}{0.651} & \multicolumn{1}{l|}{0.896} & \multicolumn{1}{l|}{0.834} & \multicolumn{1}{l|}{0.651} & \multicolumn{1}{l|}{0.885} & \multicolumn{1}{l|}{\textbf{1.000}} \\ \hline
\multicolumn{1}{|r|}{0.5} & \multicolumn{1}{l|}{0.651} & \multicolumn{1}{l|}{0.900} & \multicolumn{1}{l|}{0.841} & \multicolumn{1}{l|}{0.651} & \multicolumn{1}{l|}{0.890} & \multicolumn{1}{l|}{\textbf{1.000}} \\ \hline
\multicolumn{1}{|r|}{1} & \multicolumn{1}{l|}{0.651} & \multicolumn{1}{l|}{\textbf{0.904}} & \multicolumn{1}{l|}{\textbf{0.848}} & \multicolumn{1}{l|}{0.651} & \multicolumn{1}{l|}{\textbf{0.895}} & \multicolumn{1}{l|}{{\ul \textbf{1.000}}} \\ \hline
\multicolumn{7}{l}{\cellcolor[HTML]{C0C0C0}} \\ \hline
\multicolumn{7}{|c|}{\textbf{Perf. bounds with respect to $\delta$ at $\theta = 0.5$}} \\ \hline
\multicolumn{1}{|c|}{\textbf{$\delta$}} & \multicolumn{1}{c|}{\textbf{$\beta_f$}} & \multicolumn{1}{c|}{\textbf{$\beta_t$}} & \multicolumn{1}{c|}{\textbf{$\beta_g$}} & \multicolumn{1}{c|}{\textbf{$\beta_e$}} & \multicolumn{1}{c|}{\textbf{$\beta_p$}} & \multicolumn{1}{c|}{\textbf{$\beta_u$}} \\ \hline
\multicolumn{1}{|r|}{35} & \multicolumn{1}{l|}{0.651} & \multicolumn{1}{l|}{\textbf{0.900}} & \multicolumn{1}{l|}{\textbf{0.841}} & \multicolumn{1}{l|}{0.651} & \multicolumn{1}{l|}{\textbf{0.890}} & \multicolumn{1}{l|}{{\ul \textbf{1.000}}} \\ \hline
\multicolumn{1}{|r|}{40} & \multicolumn{1}{l|}{0.651} & \multicolumn{1}{l|}{0.790} & \multicolumn{1}{l|}{0.637} & \multicolumn{1}{l|}{0.651} & \multicolumn{1}{l|}{0.743} & \multicolumn{1}{l|}{\textbf{1.000}} \\ \hline
\multicolumn{1}{|r|}{50} & \multicolumn{1}{l|}{0.651} & \multicolumn{1}{l|}{0.677} & \multicolumn{1}{l|}{0.521} & \multicolumn{1}{l|}{0.651} & \multicolumn{1}{l|}{0.651} & \multicolumn{1}{l|}{\textbf{1.000}} \\ \hline
\multicolumn{1}{|r|}{75} & \multicolumn{1}{l|}{0.651} & \multicolumn{1}{l|}{0.653} & \multicolumn{1}{l|}{0.435} & \multicolumn{1}{l|}{0.651} & \multicolumn{1}{l|}{0.651} & \multicolumn{1}{l|}{\textbf{0.971}} \\ \hline
\multicolumn{1}{|r|}{100} & \multicolumn{1}{l|}{0.651} & \multicolumn{1}{l|}{0.652} & \multicolumn{1}{l|}{0.157} & \multicolumn{1}{l|}{0.651} & \multicolumn{1}{l|}{0.651} & \multicolumn{1}{l|}{\textbf{0.895}} \\ \hline
\multicolumn{1}{|r|}{150} & \multicolumn{1}{l|}{0.651} & \multicolumn{1}{l|}{0.652} & \multicolumn{1}{l|}{0.133} & \multicolumn{1}{l|}{0.651} & \multicolumn{1}{l|}{0.651} & \multicolumn{1}{l|}{\textbf{0.864}} \\ \hline
\multicolumn{1}{|r|}{200} & \multicolumn{1}{l|}{0.651} & \multicolumn{1}{l|}{0.652} & \multicolumn{1}{l|}{0.120} & \multicolumn{1}{l|}{0.651} & \multicolumn{1}{l|}{0.651} & \multicolumn{1}{l|}{\textbf{0.917}} \\ \hline
\multicolumn{1}{|r|}{300} & \multicolumn{1}{l|}{0.651} & \multicolumn{1}{l|}{0.651} & \multicolumn{1}{l|}{0.108} & \multicolumn{1}{l|}{0.651} & \multicolumn{1}{l|}{0.651} & \multicolumn{1}{l|}{\textbf{0.946}} \\ \hline
\multicolumn{1}{|r|}{400} & \multicolumn{1}{l|}{0.651} & \multicolumn{1}{l|}{0.651} & \multicolumn{1}{l|}{0.105} & \multicolumn{1}{l|}{0.651} & \multicolumn{1}{l|}{0.651} & \multicolumn{1}{l|}{\textbf{0.948}} \\ \hline
\multicolumn{1}{|r|}{600} & \multicolumn{1}{l|}{0.651} & \multicolumn{1}{l|}{0.651} & \multicolumn{1}{l|}{0.103} & \multicolumn{1}{l|}{0.651} & \multicolumn{1}{l|}{0.651} & \multicolumn{1}{l|}{\textbf{0.953}} \\ \hline
\multicolumn{1}{|r|}{700} & \multicolumn{1}{l|}{0.651} & \multicolumn{1}{l|}{0.651} & \multicolumn{1}{l|}{0.103} & \multicolumn{1}{l|}{0.651} & \multicolumn{1}{l|}{0.651} & \multicolumn{1}{l|}{\textbf{0.954}} \\ \hline
\multicolumn{1}{|r|}{800} & \multicolumn{1}{l|}{0.651} & \multicolumn{1}{l|}{0.651} & \multicolumn{1}{l|}{0.103} & \multicolumn{1}{l|}{0.651} & \multicolumn{1}{l|}{0.651} & \multicolumn{1}{l|}{\textbf{0.954}} \\ \hline
\multicolumn{7}{|l|}{\cellcolor[HTML]{C0C0C0}} \\ \hline
\multicolumn{7}{|c|}{\textbf{Perf. bounds with respect to $\theta$ at $\delta=800$}}\\ \hline
\multicolumn{1}{|c|}{\textbf{$\theta$}} & \multicolumn{1}{c|}{\textbf{$\beta_f$}} & \multicolumn{1}{c|}{\textbf{$\beta_t$}} & \multicolumn{1}{c|}{\textbf{$\beta_g$}} & \multicolumn{1}{c|}{\textbf{$\beta_e$}} & \multicolumn{1}{c|}{\textbf{$\beta_p$}} & \multicolumn{1}{c|}{\textbf{$\beta_u$}} \\ \hline
\multicolumn{1}{|r|}{0} & \multicolumn{1}{l|}{0.651} & \multicolumn{1}{l|}{\textbf{0.652}} & \multicolumn{1}{l|}{\textbf{0.104}} & \multicolumn{1}{l|}{0.651} & \multicolumn{1}{l|}{0.651} & \multicolumn{1}{l|}{\textbf{0.910}} \\ \hline
\multicolumn{1}{|r|}{0.5} & \multicolumn{1}{l|}{0.651} & \multicolumn{1}{l|}{0.651} & \multicolumn{1}{l|}{0.103} & \multicolumn{1}{l|}{0.651} & \multicolumn{1}{l|}{0.651} & \multicolumn{1}{l|}{\textbf{0.954}} \\ \hline
\multicolumn{1}{|r|}{1} & \multicolumn{1}{l|}{0.651} & \multicolumn{1}{l|}{0.651} & \multicolumn{1}{l|}{0.103} & \multicolumn{1}{l|}{\textbf{0.798}} & \multicolumn{1}{l|}{0.651} & \multicolumn{1}{l|}{{\ul \textbf{0.997}}} \\ \hline
\end{tabular}%
}
\vspace{2mm}
\end{minipage}
\centering
\begin{subfigure}[t]{0.3\columnwidth}
    \centering
    \includegraphics[width=\columnwidth]{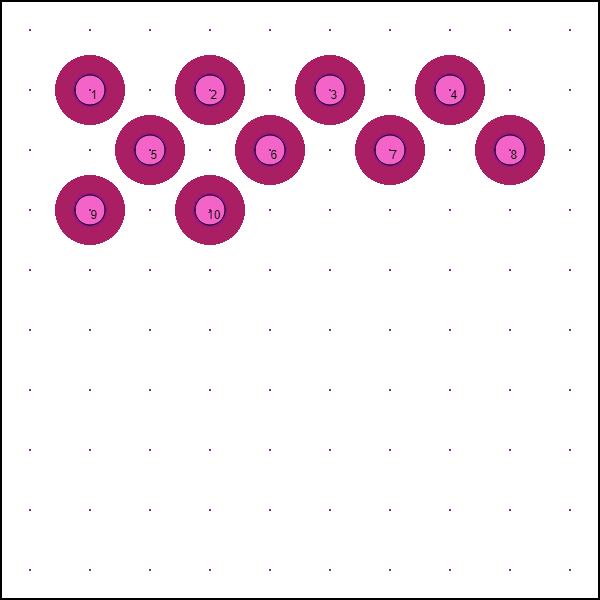}
    \caption{$\delta=35$}
\end{subfigure}%
\hfill
\begin{subfigure}[t]{0.3\columnwidth}
    \centering
    \includegraphics[width=\columnwidth]{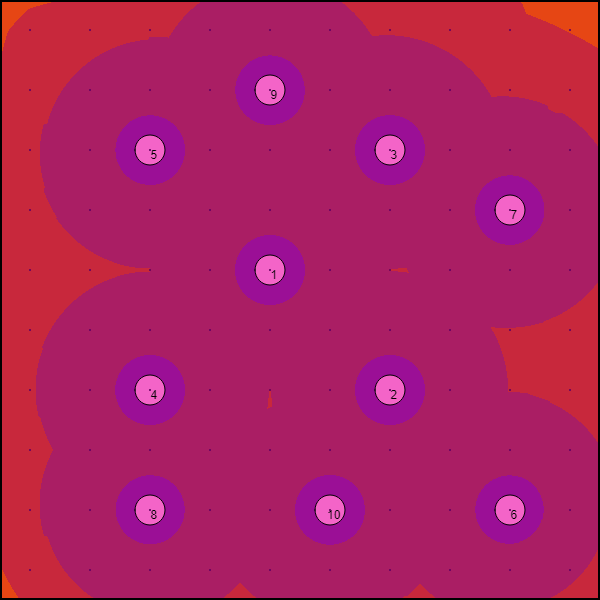}
    \caption{$\delta=400$}
\end{subfigure}%
\hfill
\begin{subfigure}[t]{0.3\columnwidth}
    \centering
    \includegraphics[width=\columnwidth]{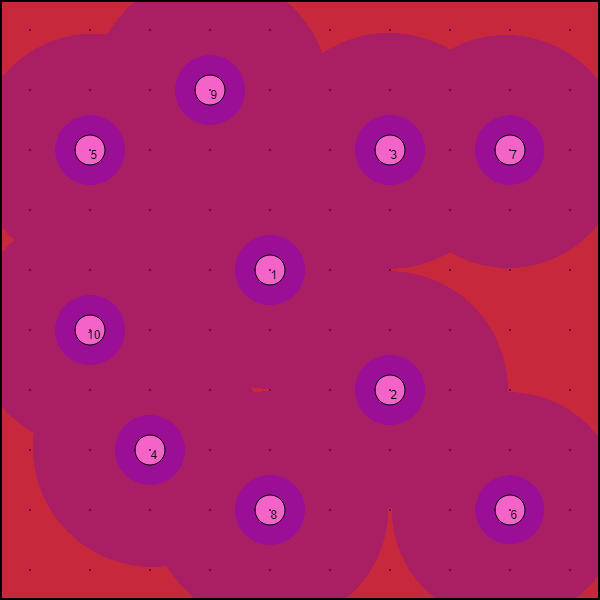}
    \caption{$\delta=800$}
\end{subfigure}%
\caption{Greedy solutions and coverage level patterns observed under different sensing range $\delta$ values considered in Tab. \ref{Tab:BlankRange}.}
\label{Fig:BlankRange}
\end{figure}

\begin{figure}[!h]
\centering
\begin{minipage}{\columnwidth}
\centering
\captionof{table}{Performance bounds observed under different sensing decay $\lambda$ values in the Blank mission space with $N=10$ agents with sensing range $\delta=800$.}
\label{Tab:BlankDecay}
\resizebox{0.85\columnwidth}{!}{%
\begin{tabular}{|lllllll|}
\hline
\multicolumn{7}{|c|}{\textbf{Perf. bounds with respect to $\theta$ at $\lambda=0.05$}} \\ \hline
\multicolumn{1}{|c|}{\textbf{$\theta$}} & \multicolumn{1}{c|}{\textbf{$\beta_f$}} & \multicolumn{1}{c|}{\textbf{$\beta_t$}} & \multicolumn{1}{c|}{\textbf{$\beta_g$}} & \multicolumn{1}{c|}{\textbf{$\beta_e$}} & \multicolumn{1}{c|}{\textbf{$\beta_p$}} & \multicolumn{1}{c|}{\textbf{$\beta_u$}} \\ \hline
\multicolumn{1}{|r|}{0} & \multicolumn{1}{l|}{0.651} & \multicolumn{1}{l|}{0.745} & \multicolumn{1}{l|}{0.595} & \multicolumn{1}{l|}{0.651} & \multicolumn{1}{l|}{0.676} & \textbf{0.943} \\ \hline
\multicolumn{1}{|r|}{0.5} & \multicolumn{1}{l|}{0.651} & \multicolumn{1}{l|}{0.790} & \multicolumn{1}{l|}{0.753} & \multicolumn{1}{l|}{0.651} & \multicolumn{1}{l|}{0.753} & \textbf{0.965} \\ \hline
\multicolumn{1}{|r|}{1} & \multicolumn{1}{l|}{0.651} & \multicolumn{1}{l|}{\textbf{0.840}} & \multicolumn{1}{l|}{\textbf{0.872}} & \multicolumn{1}{l|}{0.651} & \multicolumn{1}{l|}{\textbf{0.829}} & {\ul \textbf{0.992}} \\ \hline
\multicolumn{7}{|l|}{\cellcolor[HTML]{C0C0C0}} \\ \hline
\multicolumn{7}{|c|}{\textbf{Perf. bounds with respect to $\lambda$  at $\theta = 0.5$}} \\ \hline
\multicolumn{1}{|c|}{\textbf{$\lambda$}} & \multicolumn{1}{c|}{\textbf{$\beta_f$}} & \multicolumn{1}{c|}{\textbf{$\beta_t$}} & \multicolumn{1}{c|}{\textbf{$\beta_g$}} & \multicolumn{1}{c|}{\textbf{$\beta_e$}} & \multicolumn{1}{c|}{\textbf{$\beta_p$}} & \multicolumn{1}{c|}{\textbf{$\beta_u$}} \\ \hline
\multicolumn{1}{|r|}{0.05} & \multicolumn{1}{l|}{0.651} & \multicolumn{1}{l|}{\textbf{0.790}} & \multicolumn{1}{l|}{\textbf{0.753}} & \multicolumn{1}{l|}{0.651} & \multicolumn{1}{l|}{\textbf{0.753}} & \textbf{0.965} \\ \hline
\multicolumn{1}{|l|}{0.045} & \multicolumn{1}{l|}{0.651} & \multicolumn{1}{l|}{0.765} & \multicolumn{1}{l|}{0.714} & \multicolumn{1}{l|}{0.651} & \multicolumn{1}{l|}{0.720} & \textbf{0.951} \\ \hline
\multicolumn{1}{|r|}{0.04} & \multicolumn{1}{l|}{0.651} & \multicolumn{1}{l|}{0.739} & \multicolumn{1}{l|}{0.669} & \multicolumn{1}{l|}{0.651} & \multicolumn{1}{l|}{0.686} & \textbf{0.930} \\ \hline
\multicolumn{1}{|r|}{0.035} & \multicolumn{1}{l|}{0.651} & \multicolumn{1}{l|}{0.713} & \multicolumn{1}{l|}{0.617} & \multicolumn{1}{l|}{0.651} & \multicolumn{1}{l|}{0.651} & \textbf{0.901} \\ \hline
\multicolumn{1}{|r|}{0.3} & \multicolumn{1}{l|}{0.651} & \multicolumn{1}{l|}{0.689} & \multicolumn{1}{l|}{0.559} & \multicolumn{1}{l|}{0.651} & \multicolumn{1}{l|}{0.651} & \textbf{0.857} \\ \hline
\multicolumn{1}{|r|}{0.025} & \multicolumn{1}{l|}{0.651} & \multicolumn{1}{l|}{0.670} & \multicolumn{1}{l|}{0.493} & \multicolumn{1}{l|}{0.651} & \multicolumn{1}{l|}{0.651} & \textbf{0.795} \\ \hline
\multicolumn{1}{|r|}{0.02} & \multicolumn{1}{l|}{0.651} & \multicolumn{1}{l|}{0.659} & \multicolumn{1}{l|}{0.416} & \multicolumn{1}{l|}{0.651} & \multicolumn{1}{l|}{0.651} & \textbf{0.705} \\ \hline
\multicolumn{1}{|r|}{0.015} & \multicolumn{1}{l|}{0.651} & \multicolumn{1}{l|}{0.655} & \multicolumn{1}{l|}{0.324} & \multicolumn{1}{l|}{0.651} & \multicolumn{1}{l|}{0.651} & \textbf{0.656} \\ \hline
\multicolumn{1}{|r|}{0.01} & \multicolumn{1}{l|}{0.651} & \multicolumn{1}{l|}{0.653} & \multicolumn{1}{l|}{0.214} & \multicolumn{1}{l|}{0.651} & \multicolumn{1}{l|}{0.651} & \textbf{0.742} \\ \hline
\multicolumn{1}{|r|}{0.005} & \multicolumn{1}{l|}{0.651} & \multicolumn{1}{l|}{0.652} & \multicolumn{1}{l|}{0.118} & \multicolumn{1}{l|}{0.651} & \multicolumn{1}{l|}{0.651} & \textbf{0.912} \\ \hline
\multicolumn{1}{|r|}{0.003} & \multicolumn{1}{l|}{0.651} & \multicolumn{1}{l|}{0.651} & \multicolumn{1}{l|}{0.103} & \multicolumn{1}{l|}{0.651} & \multicolumn{1}{l|}{0.651} & \textbf{0.954} \\ \hline
\multicolumn{1}{|r|}{0.001} & \multicolumn{1}{l|}{0.651} & \multicolumn{1}{l|}{0.651} & \multicolumn{1}{l|}{0.100} & \multicolumn{1}{l|}{0.651} & \multicolumn{1}{l|}{0.651} & {\ul \textbf{0.986}} \\ \hline
\multicolumn{7}{|l|}{\cellcolor[HTML]{C0C0C0}} \\ \hline
\multicolumn{7}{|c|}{\textbf{Perf. bounds with respect to $\theta$ at $\lambda=0.001$}} \\ \hline
\multicolumn{1}{|c|}{\textbf{$\theta$}} & \multicolumn{1}{c|}{\textbf{$\beta_f$}} & \multicolumn{1}{c|}{\textbf{$\beta_t$}} & \multicolumn{1}{c|}{\textbf{$\beta_g$}} & \multicolumn{1}{c|}{\textbf{$\beta_e$}} & \multicolumn{1}{c|}{\textbf{$\beta_p$}} & \multicolumn{1}{c|}{\textbf{$\beta_u$}} \\ \hline
\multicolumn{1}{|r|}{0} & \multicolumn{1}{l|}{0.651} & \multicolumn{1}{l|}{\textbf{0.651}} & \multicolumn{1}{l|}{\textbf{0.101}} & \multicolumn{1}{l|}{0.651} & \multicolumn{1}{l|}{0.651} & \textbf{0.967} \\ \hline
\multicolumn{1}{|r|}{0.5} & \multicolumn{1}{l|}{0.651} & \multicolumn{1}{l|}{0.651} & \multicolumn{1}{l|}{0.100} & \multicolumn{1}{l|}{0.651} & \multicolumn{1}{l|}{0.651} & \textbf{0.986} \\ \hline
\multicolumn{1}{|r|}{1} & \multicolumn{1}{l|}{0.651} & \multicolumn{1}{l|}{0.651} & \multicolumn{1}{l|}{0.100} & \multicolumn{1}{l|}{\textbf{0.998}} & \multicolumn{1}{l|}{0.651} & {\ul \textbf{1.000}} \\ \hline
\end{tabular}%
}
\vspace{3mm}
\end{minipage}
\centering
\begin{subfigure}[t]{0.3\columnwidth}
    \centering
    \includegraphics[width=\columnwidth]{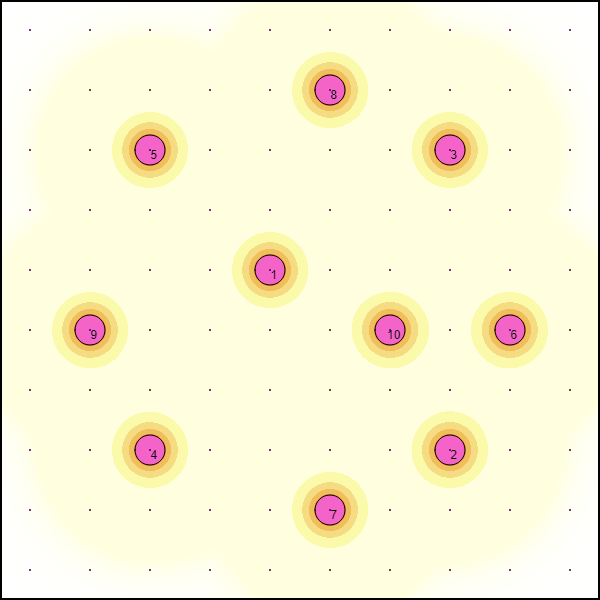}
    \caption{$\lambda = 0.05$}
\end{subfigure}%
\hfill
\begin{subfigure}[t]{0.3\columnwidth}
    \centering
    \includegraphics[width=\columnwidth]{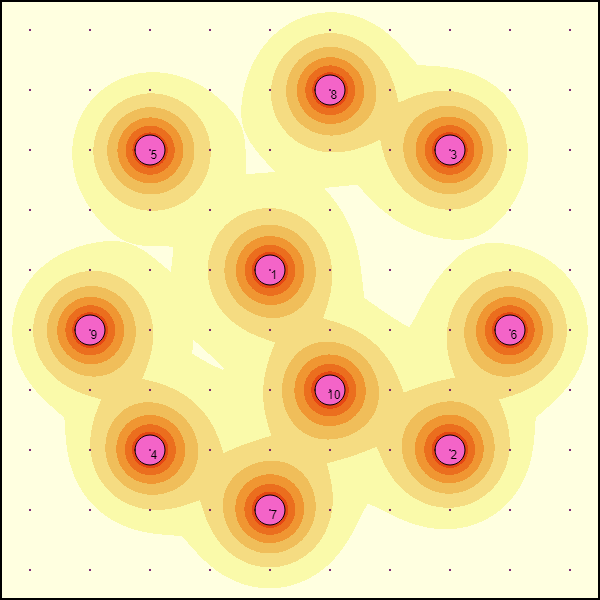}
    \caption{$\lambda = 0.025$}
\end{subfigure}%
\hfill
\begin{subfigure}[t]{0.3\columnwidth}
    \centering
    \includegraphics[width=\columnwidth]{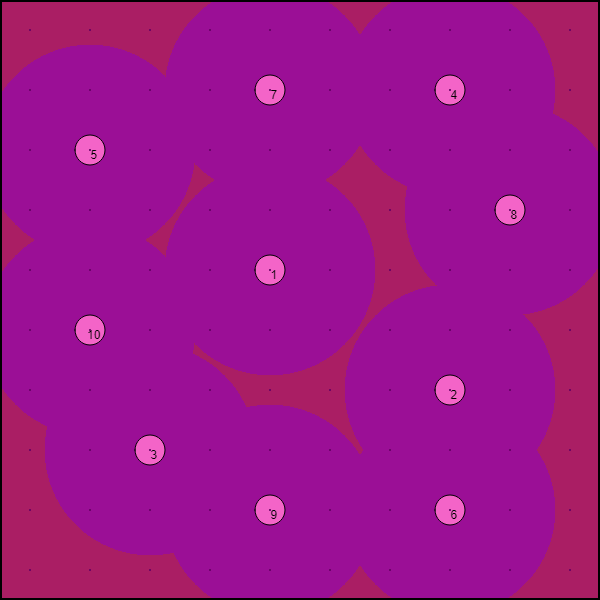}
    \caption{$\lambda = 0.001$}
\end{subfigure}%
\caption{Greedy solutions and coverage level patterns observed under different sensing decay $\lambda$ values considered in Tab. \ref{Tab:BlankDecay}.}
\label{Fig:BlankDecay}
\end{figure}

\begin{figure}[!h]
\centering
\begin{minipage}{\columnwidth}
\centering
\captionof{table}{Performance bounds observed under different sensing range $\delta$ values in the Maze mission space with $N=10$ agents with sensing range $\lambda=0.012$.}
\label{Tab:MazeRange}
\resizebox{0.85\columnwidth}{!}{%
\begin{tabular}{|rrrrrrr|}
\hline
\multicolumn{7}{|c|}{Perf. bounds with respect to $\theta$ at $\delta=50$} \\ \hline
\multicolumn{1}{|c|}{$\theta$} & \multicolumn{1}{c|}{$\beta_f$} & \multicolumn{1}{c|}{$\beta_t$} & \multicolumn{1}{c|}{$\beta_g$} & \multicolumn{1}{c|}{$\beta_e$} & \multicolumn{1}{c|}{$\beta_p$} & \multicolumn{1}{c|}{$\beta_u$} \\ \hline
\multicolumn{1}{|r|}{0} & \multicolumn{1}{r|}{0.651} & \multicolumn{1}{r|}{0.694} & \multicolumn{1}{r|}{0.682} & \multicolumn{1}{r|}{0.651} & \multicolumn{1}{r|}{0.651} & \textbf{0.992} \\ \hline
\multicolumn{1}{|r|}{0.5} & \multicolumn{1}{r|}{0.651} & \multicolumn{1}{r|}{0.718} & \multicolumn{1}{r|}{0.729} & \multicolumn{1}{r|}{0.651} & \multicolumn{1}{r|}{0.651} & \textbf{0.992} \\ \hline
\multicolumn{1}{|r|}{1} & \multicolumn{1}{r|}{0.651} & \multicolumn{1}{r|}{\textbf{0.742}} & \multicolumn{1}{r|}{\textbf{0.775}} & \multicolumn{1}{r|}{0.651} & \multicolumn{1}{r|}{\textbf{0.672}} & \textbf{\underline{0.992}} \\ \hline
\multicolumn{7}{|l|}{\cellcolor[HTML]{C0C0C0}} \\ \hline
\multicolumn{7}{|c|}{Perf. bounds with respect to $\delta$  at $\theta = 0.5$} \\ \hline
\multicolumn{1}{|c|}{$\delta$} & \multicolumn{1}{c|}{$\beta_f$} & \multicolumn{1}{c|}{$\beta_t$} & \multicolumn{1}{c|}{$\beta_g$} & \multicolumn{1}{c|}{$\beta_e$} & \multicolumn{1}{c|}{$\beta_p$} & \multicolumn{1}{c|}{$\beta_u$} \\ \hline
\multicolumn{1}{|r|}{50} & \multicolumn{1}{r|}{0.651} & \multicolumn{1}{r|}{\textbf{0.718}} & \multicolumn{1}{r|}{\textbf{0.729}} & \multicolumn{1}{r|}{0.651} & \multicolumn{1}{r|}{0.651} & \textbf{0.992} \\ \hline
\multicolumn{1}{|r|}{100} & \multicolumn{1}{r|}{0.651} & \multicolumn{1}{r|}{0.660} & \multicolumn{1}{r|}{0.401} & \multicolumn{1}{r|}{0.651} & \multicolumn{1}{r|}{0.651} & \textbf{0.976} \\ \hline
\multicolumn{1}{|r|}{150} & \multicolumn{1}{r|}{0.651} & \multicolumn{1}{r|}{0.657} & \multicolumn{1}{r|}{0.347} & \multicolumn{1}{r|}{0.651} & \multicolumn{1}{r|}{0.651} & \textbf{0.919} \\ \hline
\multicolumn{1}{|r|}{200} & \multicolumn{1}{r|}{0.651} & \multicolumn{1}{r|}{0.657} & \multicolumn{1}{r|}{0.342} & \multicolumn{1}{r|}{0.651} & \multicolumn{1}{r|}{0.651} & \textbf{0.892} \\ \hline
\multicolumn{1}{|r|}{250} & \multicolumn{1}{r|}{0.651} & \multicolumn{1}{r|}{0.656} & \multicolumn{1}{r|}{0.340} & \multicolumn{1}{r|}{0.651} & \multicolumn{1}{r|}{0.651} & \textbf{0.868} \\ \hline
\multicolumn{1}{|r|}{300} & \multicolumn{1}{r|}{0.651} & \multicolumn{1}{r|}{0.656} & \multicolumn{1}{r|}{0.329} & \multicolumn{1}{r|}{0.651} & \multicolumn{1}{r|}{0.651} & \textbf{0.847} \\ \hline
\multicolumn{1}{|r|}{350} & \multicolumn{1}{r|}{0.651} & \multicolumn{1}{r|}{0.656} & \multicolumn{1}{r|}{0.329} & \multicolumn{1}{r|}{0.651} & \multicolumn{1}{r|}{0.651} & \textbf{0.841} \\ \hline
\multicolumn{1}{|r|}{400} & \multicolumn{1}{r|}{0.651} & \multicolumn{1}{r|}{0.656} & \multicolumn{1}{r|}{0.329} & \multicolumn{1}{r|}{0.651} & \multicolumn{1}{r|}{0.651} & \textbf{0.837} \\ \hline
\multicolumn{1}{|r|}{500} & \multicolumn{1}{r|}{0.651} & \multicolumn{1}{r|}{0.656} & \multicolumn{1}{r|}{0.329} & \multicolumn{1}{r|}{0.651} & \multicolumn{1}{r|}{0.651} & \textbf{0.835} \\ \hline
\multicolumn{1}{|r|}{600} & \multicolumn{1}{r|}{0.651} & \multicolumn{1}{r|}{0.656} & \multicolumn{1}{r|}{0.328} & \multicolumn{1}{r|}{0.651} & \multicolumn{1}{r|}{0.651} & \textbf{0.828} \\ \hline
\multicolumn{1}{|r|}{700} & \multicolumn{1}{r|}{0.651} & \multicolumn{1}{r|}{0.656} & \multicolumn{1}{r|}{0.329} & \multicolumn{1}{r|}{0.651} & \multicolumn{1}{r|}{0.651} & \textbf{0.834} \\ \hline
\multicolumn{1}{|r|}{800} & \multicolumn{1}{r|}{0.651} & \multicolumn{1}{r|}{0.656} & \multicolumn{1}{r|}{0.329} & \multicolumn{1}{r|}{0.651} & \multicolumn{1}{r|}{0.651} & \textbf{0.834} \\ \hline
\multicolumn{7}{|l|}{\cellcolor[HTML]{C0C0C0}} \\ \hline
\multicolumn{7}{|c|}{Perf. bounds with respect to $\theta$ at $\delta=800$} \\ \hline
\multicolumn{1}{|c|}{$\theta$} & \multicolumn{1}{c|}{$\beta_f$} & \multicolumn{1}{c|}{$\beta_t$} & \multicolumn{1}{c|}{$\beta_g$} & \multicolumn{1}{c|}{$\beta_e$} & \multicolumn{1}{c|}{$\beta_p$} & \multicolumn{1}{c|}{$\beta_u$} \\ \hline
\multicolumn{1}{|r|}{0} & \multicolumn{1}{r|}{0.651} & \multicolumn{1}{r|}{\textbf{0.660}} & \multicolumn{1}{r|}{0.191} & \multicolumn{1}{r|}{0.651} & \multicolumn{1}{r|}{0.651} & \textbf{0.789} \\ \hline
\multicolumn{1}{|r|}{0.5} & \multicolumn{1}{r|}{0.651} & \multicolumn{1}{r|}{0.656} & \multicolumn{1}{r|}{0.329} & \multicolumn{1}{r|}{0.651} & \multicolumn{1}{r|}{0.651} & \textbf{0.834} \\ \hline
\multicolumn{1}{|r|}{1} & \multicolumn{1}{r|}{0.651} & \multicolumn{1}{r|}{0.652} & \multicolumn{1}{r|}{\textbf{0.460}} & \multicolumn{1}{r|}{0.651} & \multicolumn{1}{r|}{0.651} & \textbf{0.898} \\ \hline
\end{tabular}%
}
\vspace{3mm}
\end{minipage}
\centering
\begin{subfigure}[t]{0.3\columnwidth}
    \centering
    \includegraphics[width=\columnwidth]{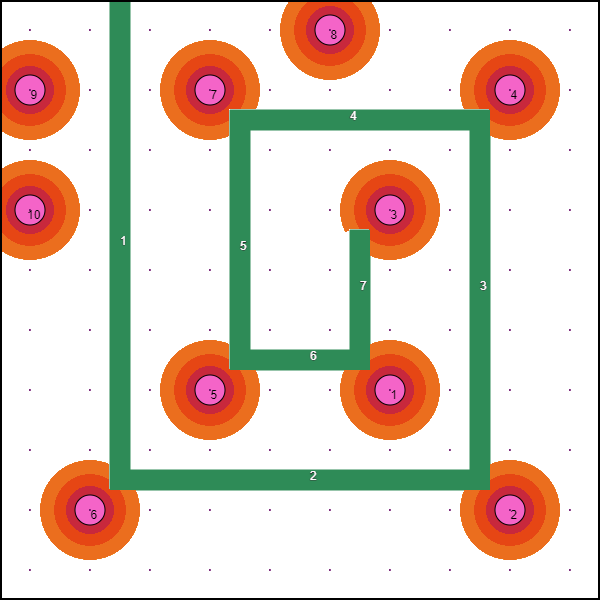}
    \caption{$\delta = 50$}
\end{subfigure}%
\hfill
\begin{subfigure}[t]{0.3\columnwidth}
    \centering
    \includegraphics[width=\columnwidth]{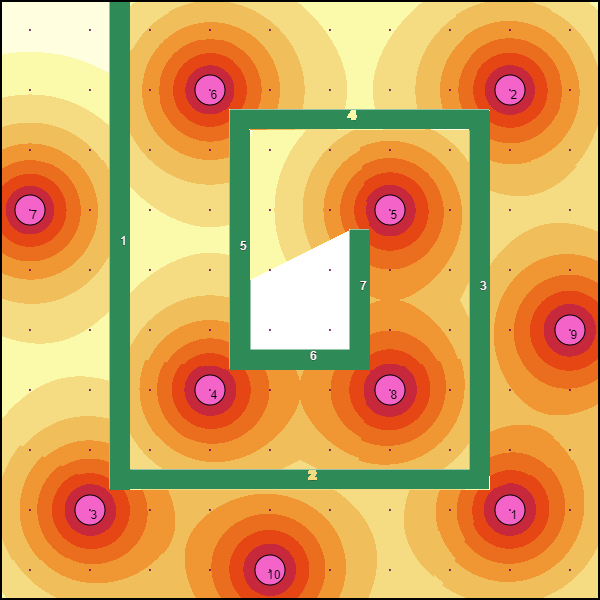}
    \caption{$\delta = 350$}
\end{subfigure}%
\hfill
\begin{subfigure}[t]{0.3\columnwidth}
    \centering
    \includegraphics[width=\columnwidth]{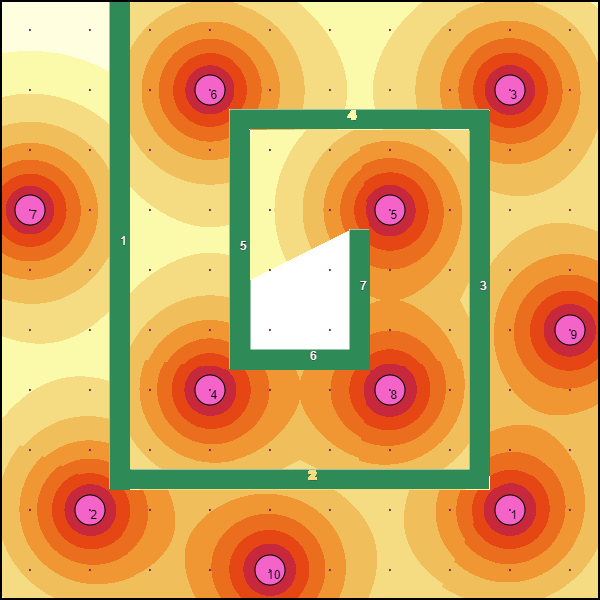}
    \caption{$\delta = 800$}
\end{subfigure}%
\caption{Greedy solutions and coverage level patterns observed under different sensing range $\delta$ values considered in Tab. \ref{Tab:MazeRange}.}
\label{Fig:MazeRange}
\end{figure}

\begin{figure}[!h]
\centering
\begin{minipage}{\columnwidth}
\centering
\captionof{table}{Performance bounds observed under different sensing decay $\lambda$ values in the General mission space with $N=10$ agents with sensing range $\delta=200$.}
\label{Tab:GeneralDecay}
\resizebox{0.85\columnwidth}{!}{%
\begin{tabular}{|rrrrrrr|}
\hline
\multicolumn{7}{|c|}{\textbf{Perf. bounds with respect to $\theta$ at $\lambda=0.05$}} \\ \hline
\multicolumn{1}{|c|}{\textbf{$\theta$}} & \multicolumn{1}{c|}{\textbf{$\beta_f$}} & \multicolumn{1}{c|}{\textbf{$\beta_t$}} & \multicolumn{1}{c|}{\textbf{$\beta_g$}} & \multicolumn{1}{c|}{\textbf{$\beta_e$}} & \multicolumn{1}{c|}{\textbf{$\beta_p$}} & \multicolumn{1}{c|}{\textbf{$\beta_u$}} \\ \hline
\multicolumn{1}{|r|}{0} & \multicolumn{1}{r|}{0.651} & \multicolumn{1}{r|}{0.711} & \multicolumn{1}{r|}{0.579} & \multicolumn{1}{r|}{0.651} & \multicolumn{1}{r|}{0.651} & \textbf{0.926} \\ \hline
\multicolumn{1}{|r|}{0.5} & \multicolumn{1}{r|}{0.651} & \multicolumn{1}{r|}{0.747} & \multicolumn{1}{r|}{0.636} & \multicolumn{1}{r|}{0.651} & \multicolumn{1}{r|}{0.686} & \textbf{0.940} \\ \hline
\multicolumn{1}{|r|}{1} & \multicolumn{1}{r|}{0.651} & \multicolumn{1}{r|}{\textbf{0.786}} & \multicolumn{1}{r|}{\textbf{0.831}} & \multicolumn{1}{r|}{0.651} & \multicolumn{1}{r|}{\textbf{0.751}} & {\ul \textbf{0.972}} \\ \hline
\multicolumn{7}{|l|}{\cellcolor[HTML]{C0C0C0}} \\ \hline
\multicolumn{7}{|c|}{\textbf{Perf. bounds with respect to $\lambda$ at $\theta = 0.5$}} \\ \hline
\multicolumn{1}{|c|}{\textbf{$\lambda$}} & \multicolumn{1}{c|}{\textbf{$\beta_f$}} & \multicolumn{1}{c|}{\textbf{$\beta_t$}} & \multicolumn{1}{c|}{\textbf{$\beta_g$}} & \multicolumn{1}{c|}{\textbf{$\beta_e$}} & \multicolumn{1}{c|}{\textbf{$\beta_p$}} & \multicolumn{1}{c|}{\textbf{$\beta_u$}} \\ \hline
\multicolumn{1}{|r|}{0.05} & \multicolumn{1}{r|}{0.651} & \multicolumn{1}{r|}{\textbf{0.747}} & \multicolumn{1}{r|}{\textbf{0.636}} & \multicolumn{1}{r|}{0.651} & \multicolumn{1}{r|}{\textbf{0.686}} & \textbf{0.940} \\ \hline
\multicolumn{1}{|r|}{0.045} & \multicolumn{1}{r|}{0.651} & \multicolumn{1}{r|}{0.729} & \multicolumn{1}{r|}{0.600} & \multicolumn{1}{r|}{0.651} & \multicolumn{1}{r|}{0.660} & \textbf{0.920} \\ \hline
\multicolumn{1}{|r|}{0.04} & \multicolumn{1}{r|}{0.651} & \multicolumn{1}{r|}{0.710} & \multicolumn{1}{r|}{0.562} & \multicolumn{1}{r|}{0.651} & \multicolumn{1}{r|}{0.651} & \textbf{0.901} \\ \hline
\multicolumn{1}{|r|}{0.035} & \multicolumn{1}{r|}{0.651} & \multicolumn{1}{r|}{0.693} & \multicolumn{1}{r|}{0.520} & \multicolumn{1}{r|}{0.651} & \multicolumn{1}{r|}{0.651} & \textbf{0.878} \\ \hline
\multicolumn{1}{|r|}{0.03} & \multicolumn{1}{r|}{0.651} & \multicolumn{1}{r|}{0.677} & \multicolumn{1}{r|}{0.476} & \multicolumn{1}{r|}{0.651} & \multicolumn{1}{r|}{0.651} & \textbf{0.840} \\ \hline
\multicolumn{1}{|r|}{0.025} & \multicolumn{1}{r|}{0.651} & \multicolumn{1}{r|}{0.666} & \multicolumn{1}{r|}{0.433} & \multicolumn{1}{r|}{0.651} & \multicolumn{1}{r|}{0.651} & \textbf{0.808} \\ \hline
\multicolumn{1}{|r|}{0.02} & \multicolumn{1}{r|}{0.651} & \multicolumn{1}{r|}{0.658} & \multicolumn{1}{r|}{0.382} & \multicolumn{1}{r|}{0.651} & \multicolumn{1}{r|}{0.651} & \textbf{0.748} \\ \hline
\multicolumn{1}{|r|}{0.015} & \multicolumn{1}{r|}{0.651} & \multicolumn{1}{r|}{0.655} & \multicolumn{1}{r|}{0.320} & \multicolumn{1}{r|}{0.651} & \multicolumn{1}{r|}{0.651} & \textbf{0.686} \\ \hline
\multicolumn{1}{|r|}{0.01} & \multicolumn{1}{r|}{0.651} & \multicolumn{1}{r|}{0.653} & \multicolumn{1}{r|}{0.243} & \multicolumn{1}{r|}{0.651} & \multicolumn{1}{r|}{0.651} & \textbf{0.716} \\ \hline
\multicolumn{1}{|r|}{0.005} & \multicolumn{1}{r|}{0.651} & \multicolumn{1}{r|}{0.652} & \multicolumn{1}{r|}{0.154} & \multicolumn{1}{r|}{0.651} & \multicolumn{1}{r|}{0.651} & \textbf{0.820} \\ \hline
\multicolumn{1}{|r|}{0.003} & \multicolumn{1}{r|}{0.651} & \multicolumn{1}{r|}{0.652} & \multicolumn{1}{r|}{0.135} & \multicolumn{1}{r|}{0.651} & \multicolumn{1}{r|}{0.651} & \textbf{0.897} \\ \hline
\multicolumn{1}{|r|}{0.001} & \multicolumn{1}{r|}{0.651} & \multicolumn{1}{r|}{0.651} & \multicolumn{1}{r|}{0.107} & \multicolumn{1}{r|}{0.651} & \multicolumn{1}{r|}{0.651} & {\ul \textbf{0.968}} \\ \hline
\multicolumn{7}{|l|}{\cellcolor[HTML]{C0C0C0}} \\ \hline
\multicolumn{7}{|c|}{\textbf{Perf. bounds with respect to $\theta$ at $\lambda=0.001$}} \\ \hline
\multicolumn{1}{|c|}{\textbf{$\theta$}} & \multicolumn{1}{c|}{\textbf{$\beta_f$}} & \multicolumn{1}{c|}{\textbf{$\beta_t$}} & \multicolumn{1}{c|}{\textbf{$\beta_g$}} & \multicolumn{1}{c|}{\textbf{$\beta_e$}} & \multicolumn{1}{c|}{\textbf{$\beta_p$}} & \multicolumn{1}{c|}{\textbf{$\beta_u$}} \\ \hline
\multicolumn{1}{|r|}{0} & \multicolumn{1}{r|}{0.651} & \multicolumn{1}{r|}{\textbf{0.651}} & \multicolumn{1}{r|}{0.103} & \multicolumn{1}{r|}{0.651} & \multicolumn{1}{r|}{0.651} & \textbf{0.964} \\ \hline
\multicolumn{1}{|r|}{0.5} & \multicolumn{1}{r|}{0.651} & \multicolumn{1}{r|}{0.651} & \multicolumn{1}{r|}{0.107} & \multicolumn{1}{r|}{0.651} & \multicolumn{1}{r|}{0.651} & \textbf{0.968} \\ \hline
\multicolumn{1}{|r|}{1} & \multicolumn{1}{r|}{0.651} & \multicolumn{1}{r|}{0.651} & \multicolumn{1}{r|}{\textbf{0.109}} & \multicolumn{1}{r|}{0.651} & \multicolumn{1}{r|}{0.651} & {\ul \textbf{0.975}} \\ \hline
\end{tabular}}
\vspace{3mm}
\end{minipage}
\centering
\begin{subfigure}[t]{0.3\columnwidth}
    \centering
    \includegraphics[width=\columnwidth]{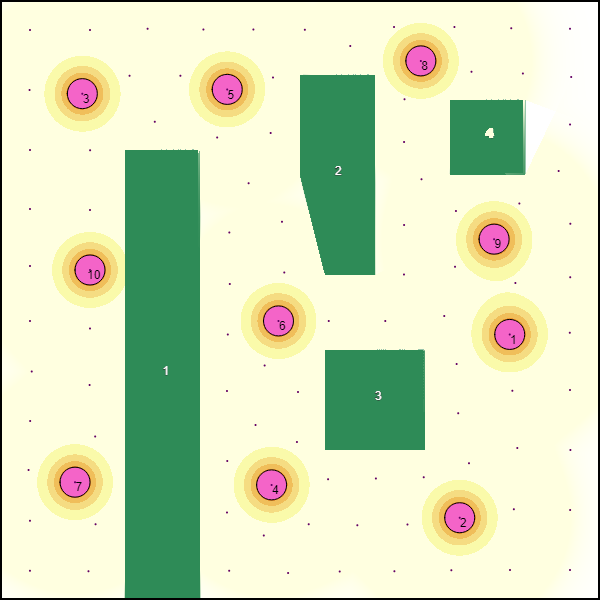}
    \caption{$\lambda = 0.05$}
\end{subfigure}%
\hfill
\begin{subfigure}[t]{0.3\columnwidth}
    \centering
    \includegraphics[width=\columnwidth]{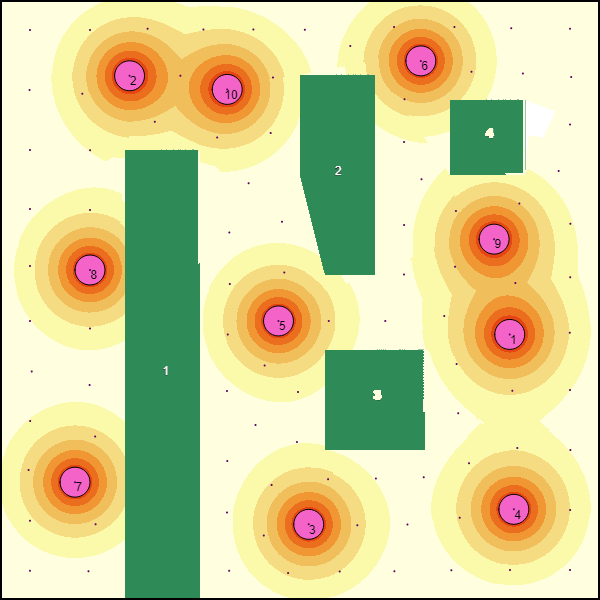}
    \caption{$\lambda = 0.025$}
\end{subfigure}%
\hfill
\begin{subfigure}[t]{0.3\columnwidth}
    \centering
    \includegraphics[width=\columnwidth]{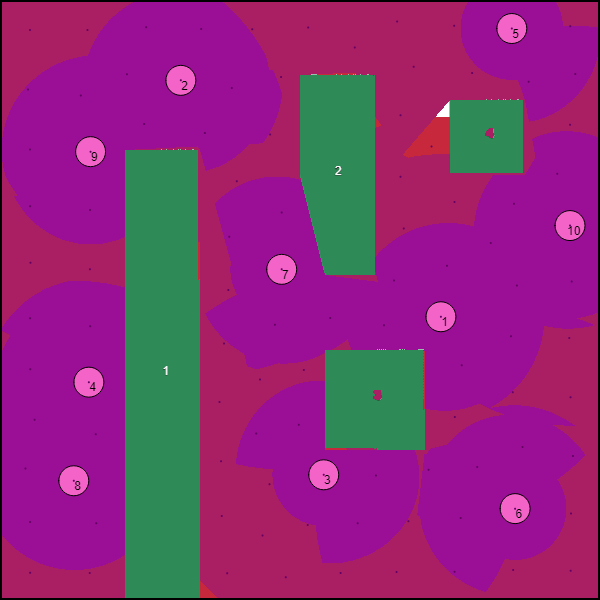}
    \caption{$\lambda = 0.001$}
\end{subfigure}%
\caption{Greedy solutions and coverage level patterns observed under different sensing decay $\lambda$ values considered in Tab. \ref{Tab:GeneralDecay}.}
\label{Fig:GeneralDecay}
\end{figure}

In our numerical experiments, we considered square-shaped mission spaces (each side is of length $600$ units) with three different obstacle arrangements named ``Blank,'' ``Maze'' and ``General'' as can be seen in Figs. \ref{Fig:DetectionFunction}-\ref{Fig:BlankDecay}, \ref{Fig:MazeRange} and \ref{Fig:GeneralDecay}, respectively. Note that, in such figures, obstacles are shown as dark green-colored blocks, candidate agent locations (ground set $X$) are shown as small black dots, and agent locations are shown as numbered pink-colored circles. Note also that light-colored areas indicate low coverage (i.e., low event detection probability) levels, while dark-colored areas indicate the opposite. The event density function was assumed to be uniform, i.e., $R(x) = 1, \forall x\in \Phi$.  

The main attributes and functionalities of the considered class of multi-agent optimal coverage problems (e.g., agent sensing capabilities \eqref{Eq:SensingFunction} and functions like detection \eqref{Eq:DetectionFunction} and coverage \eqref{Eq:CoverageProblem}) as well as the greedy algorithm (Alg. \ref{Alg:GreedyAlgorithm}) and the reviewed performance bounds $\beta_f$ \eqref{Eq:FundamentalPerformanceBound}, $\beta_t$ \eqref{Eq:TotalCurvatureBoundTheory}, $\beta_g$ \eqref{Eq:GreedyCurvatureBoundTheory}, $\beta_e$ \eqref{Eq:ElementalCurvatureBoundTheory}, $\beta_p$ \eqref{Eq:PartialCurvatureBoundTheory} and $\beta_u$ \eqref{Eq:ExtGreedyCurvatureBoundTheory} were all implemented for the considered class of multi-agent optimal coverage problems in an interactive JavaScript-based simulator which is available at \url{https://github.com/shiran27/P2-Submod_Coverage}. This simulator may be used by the reader to reproduce the reported results and also to try different new problem configurations. 

For evaluating the performance bounds $\beta_e$ and $\beta_p$, we used the proposed techniques in Props. \ref{Pr:ElementalCurvatureBound} and \ref{Pr:PartialCurvatureBound}, respectively. Unlike sampling-based techniques used in prior work, these techniques are computationally efficient and provide theoretically valid performance bounds. For evaluating the performance bound $\beta_u$, we used $Q=\bar{Q}$ in \eqref{Eq:ExtGreedyCurvatureMeasure} as in \cite{WelikalaJ02021}. 

\subsection{Impact of the Weight Parameter $\theta$}

First, we demonstrate the effects of the weight parameter $\theta$ (see the detection function in \eqref{Eq:DetectionFunction}) using the Blank mission space with $N=4$ agents. Here, each agent is assumed to have a sensing range $\lambda = 200$ and a sensing decay $\delta = 0.012$ (see the sensing function in \eqref{Eq:SensingFunction}). The observed greedy solution, coverage level pattern, and performance bounds, when this weight parameter is varied from $\theta = 0$ to $\theta=1$, are reported in Fig. \ref{Fig:DetectionFunction}. 

As stated in Rm. \ref{Rm:DetectionFunction}, choosing $\theta=1$ (i.e., promoting joint detection \eqref{Eq:JointDetection}) motivates cooperation while choosing $\theta=0$ (i.e., promoting max detection \eqref{Eq:MaxDetection}) motivates compartmentalization in sensing. This behavior is confirmed by the observations reported in Fig. \ref{Fig:DetectionFunction}. 
In particular, as can be seen in Fig. \ref{Fig:DetectionFunction}(a), when $\theta \simeq 0$, agents are spread out in the mission space - leaving a blind region in the middle but covering a wider area in the mission space. On the other hand, as can be seen in Fig. \ref{Fig:DetectionFunction}(e), when $\theta \simeq 1$, agents are located relatively closer to each other than before - not leaving a blind region in the middle but failing to adequately cover some outer regions of the mission space. 

Besides such implications, as can be seen from the performance bounds reported in Figs. \ref{Fig:DetectionFunction}(a)-(e), the weight parameter $\theta$ also affect the performance bounds. In particular, the performance bound $\beta_u$ (which, in this case, offers the highest performance bound) increases with $\theta$. This behavior implies that, with respect to the used greedy solution approach, the optimal coverage problem defined with a max detection function \eqref{Eq:MaxDetection} is harder to solve than that with a joint detection function \eqref{Eq:JointDetection}. This conclusion is intuitive as max detection functions significantly increase the non-smooth nature of the optimal coverage problems.

Note that, in the numerical examples discussed in the sequel, we predominantly used $\theta=0.5$. However, in a few extreme cases, as can be seen in the top and bottom sub-tables in Tabs. \ref{Tab:BlankRange}-\ref{Tab:GeneralDecay}, we have further investigated the effect of $\theta$ on different performance bounds.

\subsection{Impact of Agent Sensing Capabilities $\delta, \lambda$}

We next demonstrate the impact of the agent sensing capabilities (characterized by their sensing range $\delta$ and sensing decay $\lambda$) on different curvature-based performance bounds established for greedy solutions. For this purpose, we first use the Blank mission space with $N=10$ agents. In one experiment, we kept the agent sensing decay fixed at $\lambda = 0.003$ and varied the sensing range from $\delta = 35$ to $\delta = 800$. The observed performance bounds and a few selected greedy solutions are reported in Tab. \ref{Tab:BlankRange} and the accompanying Fig. \ref{Fig:BlankRange}, respectively. In the second experiment, we kept the agent sensing range fixed at $\delta = 800$ and varied the sensing decay from $\lambda = 0.05$ to $\delta = 0.001$. The observed performance bounds and a few selected greedy solutions are reported in Tab. \ref{Tab:BlankRange} and the accompanying Fig. \ref{Fig:BlankRange}, respectively.

With regard to Tabs. \ref{Tab:BlankRange} and \ref{Tab:BlankDecay}, notice that we have further explored the impact of $\theta$ at the extreme cases, i.e., when $\delta \in \{35,800\}$ and $\lambda \in \{0.05,0.001\}$, respectively. The observations in each case are given in smaller sub-tables located above and below the main table. In each table (and sub-table), the highest performance bound values observed in each row and column have been highlighted for the convenience of the reader. Note also that the tables are arranged in such a way that when going from top to bottom, the sensing capabilities of the agents increase (i.e., $\delta$ increase and $\lambda$ decrease).

Recall that, based on our analysis, the performance bounds $\beta_t$, $\beta_g$, $\beta_p$, and $\beta_u$ should provide significant improvements beyond $\beta_f$ when agent sensing capabilities are low (i.e., when $\delta$ is low and $\lambda$ is high). The results in Tabs. \ref{Tab:BlankRange} and \ref{Tab:BlankDecay} validate this conclusion (e.g., see the respective results for $\delta = 35$ and $\lambda=0.05$). 

On the other hand, our analysis implied that the performance bounds $\beta_e$ and $\beta_u$  should provide significant improvements beyond $\beta_f$ when agent sensing capabilities are high (i.e., when $\delta$ is high and $\lambda$ is low). With regard to $\beta_e$, as mentioned in Rm. \ref{Rm:ElementalCurvatureBoundTriviality}, we also need the additional condition $\theta=1$. Again, the results in Tabs. \ref{Tab:BlankRange} and \ref{Tab:BlankDecay} validate this conclusion (e.g., see the respective results for $\delta=800$ and $\lambda=0.001$, particularly with $\theta = 1$).

Moreover, as expected from our analysis, the observations in Tabs. \ref{Tab:BlankRange} and \ref{Tab:BlankDecay} also confirms that the performance bound $\beta_u$ provide significant improvements beyond $\beta_f$ regardless of the agent sensing capabilities. Of course, there is a small region of moderate agent sensing capabilities for which this improvement is low (yet, still considerable), e.g., see the respective results for $100 \leq \delta \leq 150$ and $0.01 \leq \lambda \leq 0.02$.

Finally, we consider mission spaces with obstacles, in particular, the Maze and General mission spaces, with $N=10$ agents. Parallel to the previous two experiments, in one experiment, we kept the agent sensing decay fixed at $\lambda = 0.012$ and varied the sensing range from $\delta = 50$ to $\delta = 800$. The observed performance bounds and a few selected greedy solutions are reported in Tab. \ref{Tab:MazeRange} and the accompanying Fig. \ref{Fig:MazeRange}, respectively. In the next experiment, we kept the agent sensing range fixed at $\delta = 200$ and varied the sensing decay from $\lambda = 0.05$ to $\delta = 0.001$. The observed performance bounds and a few selected greedy solutions are reported in Tab. \ref{Tab:GeneralDecay} and the accompanying Fig. \ref{Fig:GeneralDecay}, respectively.

Using the results reported in Tabs. \ref{Tab:MazeRange} and \ref{Tab:GeneralDecay}, we can validate all the previous conclusions. The only notable exception here is that the performance bound $\beta_e$ now provides no improvements. This is due to the presence of obstacles that violate the requirement stated in Rm. \ref{Rm:ElementalCurvatureBoundTriviality}.


\section{Conclusion}
\label{Sec:Conclusion}

In this paper, we considered a generalized class of multi-agent optimal coverage problems, combining the max detection and joint detection aspects of coverage applications when defining the coverage objective. We next established submodularity and several other interesting properties of the proposed coverage objective. These properties enabled efficient solving of the considered optimal coverage problem via greedy algorithms with performance-bound guarantees. To obtain further improved performance bounds, we reviewed five curvature measures found in the literature. In particular, we discussed their effectiveness and computational complexity and proposed novel techniques to estimate an efficient candidate for some of such curvature measures. We also implemented the proposed coverage problem in an interactive simulator along with its greedy solutions, and performance bounds. Numerical results given by the simulator validated our earlier conclusions regarding the used detection function and the effectiveness of different performance bounds. Ongoing research explores meaningful ways to combine the strengths of data-driven curvature measures (practicality and effectiveness) with theoretical curvature measures (guarantees).

\bibliographystyle{plainurl}
\bibliography{References} 

\end{document}